\newtheorem{definition}{Definition}
\newtheorem{lemma}{Lemma}
\newtheorem{theorem}{Theorem}
\newtheorem{corollary}{Corollary}
\newcommand{\be}{\begin{equation}}
\newcommand{\ee}{\end{equation}}
\newcommand{\bbR}{\mathbb{R}}
\newcommand{\cO}{\mathcal{O}}
\newcommand{\cL}{\mathcal{L}}
\newcommand{\0}{{\mathbf{0}}}
\renewcommand{\b}{{\mathbf{b}}}
\renewcommand{\c}{{\mathbf{c}}}
\newcommand{\e}{{\mathbf{e}}}
\newcommand{\g}{{\mathbf{g}}}
\newcommand{\h}{{\mathbf{h}}}
\renewcommand{\u}{{\mathbf{u}}}
\renewcommand{\v}{{\mathbf{v}}}
\renewcommand{\r}{{\mathbf{r}}}
\newcommand{\x}{{\mathbf{x}}}
\newcommand{\y}{{\mathbf{y}}}
\newcommand{\z}{{\mathbf{z}}}
\newcommand{\A}{{\mathbf{A}}}
\newcommand{\B}{{\mathbf{B}}}
\newcommand{\I}{{\mathbf{I}}}
\newcommand{\supp}{{\rm supp}}
\newcommand{\w}{{\mathbf{w}}}
\newcommand{\cN}{{\cal N}}
\title{Alternating direction algorithms for $\ell_0$ regularization in compressed sensing}
\author{Chaobing Song, Shu-Tao Xia}
\begin{document}
\maketitle

\begin{abstract}
In this paper we propose three iterative greedy algorithms for compressed sensing, called \emph{iterative alternating direction} (IAD), \emph{normalized iterative alternating direction} (NIAD) and  \emph{alternating direction pursuit} (ADP), which stem from the iteration steps of alternating direction method of multiplier (ADMM) for $\ell_0$-regularized least squares ($\ell_0$-LS) and can be considered as the alternating direction versions of the well-known iterative hard thresholding (IHT), normalized iterative hard thresholding (NIHT) and hard thresholding pursuit (HTP) respectively. Firstly, relative to the general iteration steps of ADMM, the proposed algorithms have no splitting or dual variables in iterations and thus the dependence of the current approximation on past iterations is direct. Secondly, provable theoretical guarantees are provided in terms of restricted isometry  property, which  is the first theoretical guarantee of  ADMM for $\ell_0$-LS to the best of our knowledge. Finally, they outperform the corresponding IHT, NIHT and HTP greatly when  reconstructing both constant amplitude signals with random signs (CARS signals) and Gaussian signals. 
\end{abstract}
\renewcommand{\thefootnote}{\fnsymbol{footnote}} \footnotetext[0]{
This research is supported in part by the Major State Basic Research
Development Program of China (973 Program, 2012CB315803), the National
Natural Science Foundation of China (61371078), and the Research Fund for the Doctoral Program of Higher Education of China (20130002110051)
.

All the authors are with the Graduate School at ShenZhen, Tsinghua University, Shenzhen, Guangdong 518055, P.R. China (e-mail: wordsword15@icloud.com,  xiast@sz.tsinghua.edu.cn).} \renewcommand{\thefootnote}{\arabic{footnote}} \setcounter{footnote}{0}

\section{Introduction}
As a new paradigm for signal sampling, compressed sensing (CS) \cite{donoho2006compressed,candes2005decoding,candes2006robust}
has attracted a lot of attention in recent years. Consider an $s$-sparse signal
$\x=(x_1,x_2,\ldots,x_n)^T\in \bbR^n$ which has at most $s$ nonzero entries. Let
$\A\in\bbR^{m\times n}$ be a measurement matrix with $m\ll n$ and $\b=\A\x$ be
a measurement vector. CS deals with recovering the original signal $\x$ from
the measurement vector $\b$ by finding the sparsest solution to the underdetermined
linear system $\b=\A\x$, i.e., solving the following \emph{$\ell_0$ minimization}
problem:
\be
\min \|\x\|_0\qquad s.t.\qquad \A\x=\b, \label{eq:l0}
\ee
where $\|\x\|_0:=|\{i:x_i\neq 0\}|$ denotes the $\ell_0$ quasi norm of $\x$.
Unfortunately, as a typical combinatorial optimization problem, the above $\ell_0$ minimization
is NP-hard \cite{candes2005decoding}.

One popular strategy is to relax the $\ell_0$ minimization
problem to an \emph{$\ell_1$ minimization} problem:
\be
\min \|\x\|_1\qquad s.t.\qquad \A\x=\b, \label{eq:l1}
\ee
which is a constrained linear programming and thus can be solved in polynomial   time \cite{nesterov1994interior} when interior point methods are employed. However, the complexity $O(m^2 n^{3/2})$  \cite{nesterov1994interior} of such a second-order method can be prohibitively expensive for the very large-scale problems that arise from typical CS applications, e.g., $n\approx 10^6$. Accordingly, recent years have witnessed a renewed interest in simpler first-order methods, which aim at solving an unconstrained problem called ``$\ell_1$-regularized least-squares ($\ell_1$-LS)'' as follows:
\be
\min_{\x} \|\x\|_1+\frac{1}{2\alpha}\|\A\x-\b\|_2^2, \label{eq:l1ls}
\ee
for an appropriately chosen $\alpha>0$ depending on the noise level. When there is no noise on the measurements b, or when the noise level is low, one can
solve \eqref{eq:l1ls} for a fixed small $\alpha>0$, which can be view as a penalty approximation to \eqref{eq:l1}. A lot of first-order algorithms were proposed to get an $\epsilon$-accuracy estimation, such as iterative shrinkage thresholding algorithm (ISTA) \cite{chambolle1998nonlinear}, fast iterative shrinkage thresholding algorithm (FISTA) \cite{beck2009fast}, gradient projection for sparse
reconstruction (GPSR) \cite{figueiredo2007gradient},  fixed point continuation (FPC) \cite{hale2008fixed} and so on.

 In \cite{yang2011alternating}, instead of solving \eqref{eq:l1ls} directly, the authors proposed alternating direction algorithms which employ the well-known alternating direction method of multiplier (ADMM) \cite{boyd2011distributed} to solve   \eqref{eq:l1ls} by variable splitting as follows:
\be
\min_{\x\in\bbR^{n},\r\in\bbR^{m}}\left\{\|\x\|_1+\frac{1}{2\alpha}\|\r\|_2^2: \A\x+\r=\b\right\}, \label{eq:admml1}
\ee
where $\r\in\bbR^{m}$ is a splitting variable.
Corresponding to the $\ell_1$-LS solvers ISTA and its variants, there exist iterative hard thresholding (IHT) \cite{Blumensath2009} and its variants such as  normalized iterative thresholding pursuit (NIHT) \cite{blumensath2010normalized}, hard thresholding pursuit (HTP) \cite{foucart2011hard}
 to solve the following ``$\ell_0$-regularized least-squares ($\ell_0$-LS)'' problem
 \be
  \min_{\x} \|\x\|_0+\frac{1}{2\alpha}\|\A\x-\b\|_2^2 \label{eq:l0ls}
\ee
directly without convex relaxation (In IHT, NIHT and HTP, $\alpha$ is changed implicitly to maintain $s$ nonzero entries in the approximation vector in each iteration.). These iterative greedy algorithms have comparable theoretical guarantees with $\ell_1$ minimization in terms of \emph{restricted isometry property} (RIP), good empirical performance and low computational complexity   \cite{blumensath2010normalized},   \cite{foucart2011hard}, \cite{maleki2010optimally}. Accordingly, instead of the $\ell_1$-regularization in \eqref{eq:admml1}, one may try to solve
 \be
\min_{\x\in\bbR^{n},\r\in\bbR^{m}}\left\{\|\x\|_0+\frac{1}{2\alpha}\|\r\|_2^2: \A\x+\r=\b\right\} \label{eq:admml0}
\ee
directly too. In \cite{boyd2011distributed}, the authors gave a beneficial discussion about this idea, but they did not give any theoretical guarantee and gave no connection with the existing iterative greedy algorithms.

In this paper, we give a further study about \eqref{eq:admml0}. Firstly, we reformulate the classical iteration steps of ADMM iteration to a new form without splitting and dual variables which is called the \emph{iterative alternating direction} (IAD) algorithm, and thus show its close connection with the well-known IHT algorithm. Then, two variants of IAD called \emph{normalized iterative alternating direction} (NIAD) and \emph{alternating direction pursuit} (ADP) are proposed which correspond to the variants NIHT and HTP of IHT. Moreover, the theoretical guarantees are given in terms of \emph{restricted isometry property} (RIP) for IAD, NIAD and ADP. Finally, experiments are given to show the improved empirical performance of  IAD, NIAD and ADP relative to the corresponding IHT, NIHT and HTP algorithms.

Notations: Let $\mathbf{x}\in\mathbb{R}^n$. Let $S\subseteq \{1,2,\ldots,n\}$, and $|S|$ and $\overline{S}$ respectively denote the cardinality and complement of $S$. Let $\mathbf{x}_S\in\mathbb{R}^n$ denote the vector obtained from $\mathbf{x}$ by keeping the $|S|$ entries in $S$ and setting all other entries to zero.  Let $\text{supp}(\mathbf{x})$ denote the support of $\mathbf{x}$ or the set of indices of nonzero entries in $\mathbf{x}$. Note that $\mathbf{x}$ is $s$-sparse if and only if $|\text{supp}(\mathbf{x})|\le s$. For a matrix $\mathbf{\A}\in\mathbb{R}^{m\times n}$, let $\mathbf{\A}^{\! T}$ denote the  transpose of $\mathbf{\A}$ and $\mathbf{\A}_S$ denote the submatrix that consists of columns of $\mathbf{\A}$ with indices in $S$. Let $\mathbf{I}$ denote the identity matrix whose dimension is decided by contexts. In addition, define $
\rm hard(\x,\tau)(\tau\ge0)$ the vector that set all the entries of $\x$ except the entries (in magnitude) larger than $\tau$ to zero and $H_s(\x)$ the vector that set all the entries of $\x$ except the $s$ largest magnitude entries to zero. Finally, for all series $\{c(k)\},k\in\{0,1,2\cdots\}$, we denote $\sum_{k=i}^{j}c(i)=0$, if $i>j$.

Denote the general CS model:
\begin{equation}
\mathbf{b}=\mathbf{\A}\mathbf{x}+\mathbf{e}=\mathbf{\A}\mathbf{x}_S+\mathbf{\A}\mathbf{x}_{\overline{S}}+\mathbf{e}=\mathbf{\A}\mathbf{x}_S+\mathbf{e}^{\prime},
\label{eq:general_model}
\end{equation}
where $\mathbf{\A}\in\mathbb{R}^{m\times n}$ is a measurement matrix with $m\ll n$, $\mathbf{e}\in\mathbb{R}^m$
is an arbitrary noise, $\mathbf{b}\in\mathbb{R}^m$ is a low-dimensional observation, $S$ denotes the indices of the $s$ largest magnitude entries of $\x$, and $\mathbf{e}^{\prime}=\mathbf{\A}\mathbf{x}_{\overline{S}}+\mathbf{e}$
denotes the total perturbation by the sparsity defect $\mathbf{x}_{\overline{S}}$ and measurement error $\mathbf{e}$.

\section{Reformulation of ADMM iteration}

\subsection{Applying ADMM to $\ell_0$-LS after variable splitting}
Consider the $\ell_0$-LS problem \eqref{eq:admml0} after variable splitting. The  augmented Lagrangian function of \eqref{eq:admml0} is give as follows
\be
\cL(\x,\r,\y)=\{\|\x\|_0+\frac{1}{2\alpha}\|\r\|_2^2-\y^{T}(\A\x+\r-\b)+\frac{\beta}{2}\|\A\x+\r-\b\|_2^2\}, \label{eq:2}
\ee
where the dual variable $\y\in\bbR^{m}$ is a multiplier and $\beta>0$ is a penalty parameter.
In \eqref{eq:2}, by utilizing the separability structure of $\x$ and $\r$ in the objective function, ADMM minimize $\cL(\x,\r,\y)$ with respect to $\x$ and $\r$ separately via a Gauss-Seidel type iteration. After minimizing $\r$ and $\x$ once in order, the multiplier $\y$ is updated immediately. The iteration steps can be explained as follows.

\begin{algorithm}[H]
\label{alg:general-admm}
Initialize $\x(0)$ and $\y(0)=0$, $k=0$.\\
Iteration: At the $k$-th iteration, go through the following steps.
\begin{enumerate}
\item $\r(k+1)=\arg\min_{\r} \cL(\x(k),\r,\y(k));$
\item $\x(k+1)=\arg\min_{\x} \cL(\x,\r(k+1),\y(k));$
\item  $\y(k+1)=\y(k)-\beta(\A\x(k+1)+\r(k+1)-\b).$
\end{enumerate}
until the stopping criterion is met. \\
Output: $\r(k+1),\x(k+1),\y(k+1)$.
\caption{ADMM}\label{alg:1}
\end{algorithm}

Firstly, in step 1 of  Alg. \ref{alg:1}, the minimizer of \eqref{eq:2} with respect to $\r$ is given by
\be
\r(k+1)=\frac{\alpha\beta}{1+\alpha\beta}\left(\y(k)/\beta+\b-\A\x(k)\right). \label{eq:3}
\ee
However, after some simple transform, the subproblem in step 2 of Alg. \ref{alg:1} is equivalent to
\be
\min_{\x\in\bbR^n}\|\x\|_0+\frac{\beta}{2}\|\A\x+\r(k+1)-\b-\y(k)/\beta\|_2^2,\label{eq:33}
\ee
which is the form of \eqref{eq:l0ls}. We approximately  solve \eqref{eq:33} by using a quadratic approximation of $\frac{1}{2}\|\A\x+\r(k+1)-\b-\y(k)/\beta\|_2^2$ at $\x=\x(k)$,
but keeping $\|\x\|_0$ intact:
\be
\min_{\x\in\bbR^{n}}\|\x\|_0+\beta\left(\g(k)^{T}(\x-\x(k))+\frac{1}{2\tau}\|\x-\x(k)\|_2^2\right), \label{eq:l01}
\ee
where $\tau>0$ is a proximal parameter and
\begin{eqnarray*}
\g(k)&=&\A^{\! T}\left(\A\x(k)+\r(k+1)-\b-\y(k)/\beta\right)\\
&=&-\frac{1}{\alpha\beta}\A^{\! T} \r(k+1)
\end{eqnarray*}
is the gradient vector of $\frac{1}{2}\|\A\x+\r(k+1)-\b-\y(k)/\beta\|_2^2$ at $\x=\x(k)$.
\eqref{eq:l01} can be solved explicitly by (see e.g., \cite{wright2009sparse})
\be
\x(k+1)=\rm hard\left(\x(k)-\tau\g(k),\sqrt{\frac{2\tau}{\beta}}\right)=hard\left(\x(k)+\frac{\tau}{\alpha\beta}\A^{\! T}\r(k+1),\sqrt{\frac{2\tau}{\beta}}\right).
\ee
Finally, the multiplier $\y$ is updated by
\be
 \y(k+1)=\y(k)-\beta\left(\A\x(k+1)+\r(k+1)-\b\right).
 \ee
 So, when applying ADMM to $\ell_0$-LS after variable splitting, the algorithm can be summarized as follows:
\begin{algorithm}[H]
Initialize $\x(0)$ and $\y(0)=0$, $k=0$.\\
Iteration: At the $k$-th iteration, go through the following steps.
\begin{enumerate}
\item $\r(k+1)=\frac{\alpha\beta}{1+\alpha\beta}\left(\y(k)/\beta+\b-\A\x(k)\right);$
\item $\x(k+1)=\rm{hard}\left(\x(k)+\frac{\tau}{\alpha\beta}\A^{\! T}\r(k+1),\sqrt{\frac{2\tau}{\beta}}\right);$
\item  $\y(k+1)=\y(k)-\beta(\A\x(k+1)+\r(k+1)-\b).$
\end{enumerate}
until the stopping criteria is met. \\
Output: $\r(k+1),\x(k+1),\y(k+1)$.
\caption{ADMM for $\ell_0$-LS after variable splitting}\label{alg:admm-l0}
\end{algorithm}

\subsection{Reformulation of  ADMM for $\ell_0$-LS after variable splitting}
The form of Alg. \ref{alg:admm-l0} can be used in practice directly, but there is no theoretical guarantee to the best of the authors' knowledge when hard thresholding operator is applied in step 2 and the roles of the three parameters $\alpha,\beta,\tau$ to the algorithm are not very clear, making them  inconvenient to be tuned. However, after some transformations, one can get a formula on $\x(k+1)$ which is relatively convenient to be analyzed and expresses clear roles of $\tau$, $\alpha\beta$ and $\beta$.

Firstly, a useful lemma is introduced as follows.
 \begin{lemma}
\label{lem:1}
For two series $\{a(k)\},\{b(k)\}$, where $a(k),b(k)\in\bbR, k\in\{0,1,2,\cdots\}$,  and  three numbers $c_1,c_2,c_3\in\bbR$, if $b(k+2)=c_1 b(k+1)+c_2 a(k+1)+c_3 a(k),k\in\{0,1,2,\cdots\}$, then for $k\in\{1,2,3,\cdots\}$, one has
\begin{equation}
b(k+1)=c_1^k b(1)+c_2a(k)+(c_1 c_2+c_3)\sum_{i=1}^{k-1}c_1^{k-1-i}a(i)+c_1^{k-1}c_3a(0).
\end{equation}
\end{lemma}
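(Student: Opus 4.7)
The statement is a closed-form expression for a second-order linear recurrence with a forcing term, so the natural plan is a direct induction on $k$.

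The base case $k=1$ is immediate: the formula collapses to $b(2)=c_1 b(1)+c_2 a(1)+c_1c_3 a(0)$ plus an empty sum (using the convention $\sum_{i=1}^{0}=0$ announced in the notations), which is exactly the hypothesis $b(k+2)=c_1 b(k+1)+c_2 a(k+1)+c_3 a(k)$ evaluated at $k=0$ after noting that $c_1^0=1$ and $c_1^{-1}\cdot(\text{empty sum})$ does not appear. So the case $k=1$ is merely a rewrite of the given recurrence.

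For the inductive step, assume the formula holds at index $k$, i.e.
\begin{equation*}
b(k+1)=c_1^{k}b(1)+c_2 a(k)+(c_1c_2+c_3)\sum_{i=1}^{k-1}c_1^{k-1-i}a(i)+c_1^{k-1}c_3 a(0).
\end{equation*}
Plug this into the given recurrence $b(k+2)=c_1 b(k+1)+c_2 a(k+1)+c_3 a(k)$. This yields
\begin{equation*}
b(k+2)=c_1^{k+1}b(1)+c_1 c_2 a(k)+(c_1c_2+c_3)\sum_{i=1}^{k-1}c_1^{k-i}a(i)+c_1^{k}c_3 a(0)+c_2 a(k+1)+c_3 a(k).
\end{equation*}
The key bookkeeping is to group the $a(k)$ terms: $c_1 c_2 a(k)+c_3 a(k)=(c_1c_2+c_3)a(k)$, which is exactly the $i=k$ summand one would adjoin to the existing sum to promote it to $\sum_{i=1}^{k}c_1^{k-i}a(i)$. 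After that absorption, the expression reads
\begin{equation*}
b(k+2)=c_1^{k+1}b(1)+c_2 a(k+1)+(c_1c_2+c_3)\sum_{i=1}^{k}c_1^{k-i}a(i)+c_1^{k}c_3 a(0),
\end{equation*}
which is precisely the target formula at index $k+1$. This closes the induction.

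There is no real obstacle here; the only thing that could go wrong is an off-by-one in either the exponent $c_1^{k-1-i}\mapsto c_1^{k-i}$ after shifting, or in the upper limit of the sum. I would verify the matching by tracking those two indices explicitly (and double-checking the $a(k)$ absorption described above), since that is the only place where a careless step could produce a spurious $c_1$ factor.
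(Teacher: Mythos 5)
Your induction is correct and is essentially the paper's argument run in the opposite direction: the paper unrolls the recurrence downward from $b(k+1)$ through $b(k), b(k-1),\dots$ to $b(1)$, while you induct upward, with the same key bookkeeping step of merging $c_1c_2\,a(k)$ and $c_3\,a(k)$ into the $(c_1c_2+c_3)$ sum. One small slip in your base case: the $k=1$ instance of the formula gives $c_1^{0}c_3a(0)=c_3a(0)$, not $c_1c_3a(0)$, and this is what matches the recurrence evaluated at $k=0$.
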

 The proof of Lemma \ref{lem:1} can be found in the supplementary.

 Consider the general CS model \eqref{eq:general_model}, denote
\be
\c(k)=\b-\A\x(k)=\A(\x_S-\x(k))+\e^{\prime}.\label{eq:11}
\ee
Then in step 1 of Alg. \ref{alg:admm-l0},  letting $\r(k+2)-\r(k+1)$ and using the identity in step 3 to eliminate the factors $\y(k+1),\y(k)$,
for $k\in\{0,1,2,\cdots\}$, we have
\be
\r(k+2)-\frac{1}{1+\alpha\beta}\r(k+1)=\frac{\alpha\beta}{1+\alpha\beta}\left(2\c(k+1)-\c(k)\right).\nonumber
\ee
In addition, in step 1 of Alg. \ref{alg:admm-l0}, $\r(1)=\frac{\alpha\beta}{1+\alpha\beta}\c(0)$. Then by Lemma \ref{lem:1} for $k\in\{1,2,3,\cdots\}$, one has
\be
\r(k+1)=\frac{2\alpha\beta}{1+\alpha\beta}\c(k)+\sum_{i=1}^{k-1}\frac{\alpha\beta(1-\alpha\beta)}{(1+\alpha\beta)^{k+1-i}}\c(i)-\frac{\alpha^2\beta^2}{(1+\alpha\beta)^{k+1}}c(0) \nonumber.
\ee
Then for $k\in\{1,2,3,\cdots\}$, the step 2 of Alg. \ref{alg:admm-l0} can be reformulated as follows:
\begin{eqnarray}
\x(k+1)&=&\rm{hard}\Bigg(\x(k)+\frac{2\tau}{1+\alpha\beta}\A^{\! T}\c(k)+\tau(1-\alpha\beta)\sum_{i=1}^{k-1}\frac{1}{(1+\alpha\beta)^{k+1-i}}\A^{\! T}\c(i) \nonumber\\
&&-\frac{\alpha\beta\tau}{(1+\alpha\beta)^{k+1}}\A^{\! T}\c(0),\sqrt{\frac{2\tau}{\beta}}\Bigg), \label{eq:13}
\end{eqnarray}
In \eqref{eq:13}, one can see that $\x(k+1)$ has no dependence on the splitting variable $\r(k+1)$ or the dual variable $\y(k)$, so steps 1 and 3 can be eliminated in Alg. \ref{alg:admm-l0} if  \eqref{eq:13} is used to update $\x(k+1)$. From \eqref{eq:13}, the alternating direction iteration of ADMM can be seen as a method to take advantage of the approximation vector $\x(i),i\in\{0,1,2,\cdots,k-1\}$ before the $k$-iteration to update $\x(k+1)$ in some effective way (in each iteration, we minimize $\r(k+1)$ as well as $\x(k+1)$). In addition, the parameters $\alpha$ and $\beta$ influence the vector in the hard thresholding operator only by their product $\alpha\beta$, which is not obvious in the original formula in step 2 of Alg. \ref{alg:admm-l0}. $\alpha\beta$ determines the decay rate of the impact of the residue vector $\A^{\! T}\c(i), i\in\{0,1,2,\cdots,k\}$ as the iteration goes. For $\alpha\beta$ fixed, $\tau$ gives a tradeoff between the contributions of current approximation $\x(k)$ and the residue in the past iterations and $\beta$ is used to tune the number of the nonzero entries in $\x(k+1)$.

Now define $\gamma=\alpha\beta,\mu=\frac{1+\alpha\beta}{2}\tau$ and assume that $\gamma$ is fixed in each iteration and the sparsity $s$ is known in priori or estimated beforehand. If one always maintains $s$ nonzero entries in $\x(n+1)$  in each iteration (in this case  $\beta$ is determined implicitly),
then \eqref{eq:13} can be reformulated as follows,
\be
\x(k+1)=H_s\left(\x(k)+\mu\left(\A^{\! T}\c(k)+\frac{1-\gamma}{2}\sum_{i=1}^{k-1}\frac{1}{(1+\gamma)^{k-i}}\A^{\!
T}\c(i)-\frac{\gamma}{2(1+\gamma)^{k}}\A^{\! T}\c(0)\right)\right). \label{eq:14} \\
\ee
When using \eqref{eq:14} in our iteration, in order to avoid repetitive computations, for $k\in\{1,2,3,\cdots\}$, we set
\begin{eqnarray*}
\u(k)&=&\frac{1-\gamma}{2}\sum_{i=1}^{k-1}\frac{1}{(1+\gamma)^{k-i}}\A^{\!
T}\c(i),\\
\v(k)&=&\frac{\gamma}{2(1+\gamma)^{k}}\A^{\! T}\c(0).
\end{eqnarray*}
Define $f(\x)=\frac{1}{2}\|\A\x-\b\|_2^2$. Then one has
\begin{eqnarray*}
\u(k+1)&=&-\frac{1-\gamma}{2(1+\gamma)}\nabla f(\x(k))+\frac{1}{1+\gamma}\u(k),\\
\v(k+1)&=&\frac{1}{1+\gamma}\v(k),
\end{eqnarray*}
where $\nabla f(\x(k))=\A^{\! T}(\A\x(k)-\b)$.
For $k=0$, by the iteration steps of Alg. \ref{alg:admm-l0}, if maintaining $s$ nonzero entries in $\x(1)$, one has
\[
\x(1)=H_s(\x(0)+\frac{1}{2}\mu\A^{\! T}(\b-\A\x(0))).\]
According to the above derivations and  assumptions,  Alg. \ref{alg:admm-l0} can be reformulated as Alg. \ref{alg:iad} which is called ``iterative alternating direction'' (IAD), corresponding to IHT.
\begin{algorithm}[ht]
Input: $\y,\A,s,\mu,\gamma,\x(0)$.\\
Initialization: $\x(1)=H_s(\x(0)+\frac{1}{2}\mu\A^{\! T}(\b-\A\x(0))), \u(1)=\0, \v(1)=\frac{\gamma}{2(1+\gamma)}\A^{\! T}(\b-\A\x(0)).$ \\
Iteration: For $k=1,2,3,\cdots$, go through the following steps.
\begin{enumerate}
\item $\nabla f(\x(k))=\A^{\! T}(\A\x(k)-\b);$
\item $\x(k+1)=H_s(\x(k)+\mu(-\nabla f(\x(k))+\u(k)-\v(k)));$
\item $\u(k+1)=-\frac{1-\gamma}{2(1+\gamma)}\nabla f(\x(k))+\frac{1}{1+\gamma}\u(k);$
\item $\v(k+1)=\frac{1}{1+\gamma}\v(k).$
\end{enumerate}
until the stopping criterion is met. \\
Output: $\mathbf{x}(k+1)$.
\caption{Iterative alternating direction (IAD)}\label{alg:iad}
\end{algorithm}

If $\u(k)$ and $\v(k)$ are set to $\0$, IAD degrades to IHT. In fact, if setting $\gamma=1$, then $\u(k)=\0,$ and thus the only difference between IAD and IHT is $\v(k)=\frac{1}{2^{k+1}}\A^{\! T}\c(0)$ which decays at exponential rate and has little impact on $\x(k+1)$ as iteration goes. Therefore, IAD  can be considered as an alternating direction version of IHT. However, the empirical performance of IAD can be much better than that of IHT when $\gamma$ is set to some small value, such as $\gamma=0.1$. In this case, $\u(k)\neq\0$ can improve the effect of the hard thresholding noteworthily.
In addition, just like the role of $\mu$ in IHT, the selection of $\mu$ makes a big difference about the empirical performance of IAD. Finally, the requested additional computation for the added steps 3 and 4 are clearly marginal.


Corresponding to the well-known NIHT and HTP, the variants ``normalized iterative alternating direction'' (NIAD) and ``alternating direction pursuit'' (ADP ) are given in Alg. \ref{alg:niad} and Alg. \ref{alg:adp} respectively. In  step 2 of NIAD, $\mu(k+1)$ is set to a step size that maximally reduces the error \cite{golub2012matrix} in each iteration. If $\x(0)$ is set to some $s$-sparse vector, $\mu(1)$ can be selected according to the initialization step; if $\x(0)$ is set to $\0$ simply, denote $S(1)$ the indices of the $s$ largest magnitude entries in $\A^{\! T}\b$,  one can set $\mu(1)=\frac{\|(\A^{\! T}\b)_{S(1)}\|_2^2}{\|\A(\A^{\! T}\b)_{S(1)}\|_2^2}$. See more discussions in \cite{blumensath2010normalized}. In the initialization step and step 4 of ADP, ADP sets $\mu=1$ in each iteration and solves a least squares problem on the support of $\w(k+1)$ for debiasing just like HTP does \cite{foucart2011hard}.

\begin{table}
\begin{tabular}{p{6.5cm}p{6.5cm}}
\begin{minipage}{6.5cm}
\begin{algorithm}[H]
Input: $\y,\A,s,\gamma,\x(0)$.\\
Initialization:\\ $\mu(1)=\frac{\|(\A^{\! T}(\b-\A\x(0)))_{\rm supp(\x(0))}\|_2^2}{\|\A(\A^{\! *}(\b-\A\x(0)))_{\rm supp(\x(0))}\|_2^2};$\\
$\x(1)=H_s(\x(0)+\frac{1}{2}\mu(1)\A^{\! T}(\b-\A\x(0)));$\\
 $\qquad\quad\u(1)=\0, \v(1)=\frac{\gamma}{2(1+\gamma)}\A^{\! T}(\b-\A\x(0)).$ \\
Iteration: For $k=1,2,3,\cdots$, go through the following steps.
\begin{enumerate}
\item $\nabla f(\x(k))=\A^{\! T}(\A\x(k)-\b);$
\item $\mu(k+1)=$
\\$\frac{\|\left(-\nabla f(\x(k))+\u(k)-\v(k)\right)_{S(k)}\|_2^2}{\|\A\left(-\nabla f(\x(k))+\u(k)-\v(k)\right)_{S(k)}\|_2^2};$
\item $\x(k+1)=H_s\big(\x(k)+\mu(k+1)$\\$\cdot(-\nabla f(\x(k))+\u(k)-\v(k))\big);$
\item $S(k+1)=\supp\left(\x(k+1)\right);$
\item $\u(k+1)=-\frac{1-\gamma}{2(1+\gamma)}\nabla f(\x(k))+\frac{1}{1+\gamma}\u(k);$
\item $\v(k+1)=\frac{1}{1+\gamma}\v(k).$
\end{enumerate}
until the stopping criterion is met. \\
Output: $\mathbf{x}(k+1)$.
\caption{Normalized iterative alternating direction (NIAD)}\label{alg:niad}
\end{algorithm}
\end{minipage}
&
\begin{minipage}{6.5cm}

\begin{algorithm}[H]
Input: $\y,\A,s,\gamma,\x(0)$.\\
Initialization: \\
$\w(1)=H_s(\x(0)+\frac{1}{2}\A^{\! T}(\b-\A\x(0))); $\\ $S(1)=\supp(\w(1));$\\
 $\x(k+1)=$\\$\text{arg}\min_{\z \in\bbR^n}\{ \Vert\b-\A\z\Vert_2,\;
    \text{supp}(\z)\subseteq S(1)\};$\\
 $\u(1)=\0, \v(1)=\frac{\gamma}{2(1+\gamma)}\A^{\! T}\left(\b-\A\x(0)\right).$ \\
Iteration: For $k=1,2,3,\cdots$, go through the following steps.
\begin{enumerate}
\item
$\nabla f(\x(k))=\A^{\! T}(\A\x(k)-\b);$
\item $\w(k+1)=$\\$H_s\left(\x(k)-\nabla f(\x(k))+\u(k)-\v(k)\right);$
\item $S(k+1)=\supp(\w(k+1));$
\item $\x(k+1)=\text{arg}\min_{\z \in\bbR^n}\{ \Vert\b-\A\z\Vert_2,\;
    \text{supp}(\z)\subseteq S(k+1)\}$;
\item $\u(k+1)=-\frac{1-\gamma}{2(1+\gamma)}\nabla f(\x(k))+\frac{1}{1+\gamma}\u(k);$
\item $\v(k+1)=\frac{1}{1+\gamma}\v(k).$
\end{enumerate}
until the stopping criterion is met. \\
Output: $\mathbf{x}(k+1)$.
\caption{Alternating direction pursuit (ADP)}\label{alg:adp}
\end{algorithm}
\end{minipage}
\end{tabular}
\end{table}
\section{Theoretical analysis}
This section highlights our theoretical results for IAD, NIAD and ADP. The proofs can be found in the supplementary.

\begin{definition}[\cite{candes2005decoding}]
\label{def:rip}
A matrix $\mathbf{\A}\in\mathbb{R}^{m\times n}$ is said to satisfy the $(\delta_s,s)$-order RIP if $\left|\frac{\|\A\x\|_2^2}{\|\x\|_2^2}-1\right|\le\delta_s$ for all $\x$ with $\|\x\|_0\le s$ and $\x\neq\0$.
\end{definition}

\begin{theorem}
\label{thm:1}
Consider the general CS model \eqref{eq:general_model}.
For each algorithm \emph{alg} from \{IAD, NIAD, ADP\}, if $\A$ satisfies $\rho^{alg}>\frac{|1-\gamma|}{\gamma}$, alg is guaranteed after $k$ iterations to return an approximation $\x(k+1)$ satisfying
\begin{eqnarray}
\|\x_S-\x(k+1)\|_2&\le&(c_1^{alg}(\lambda_1^{alg})^{k-2}+c_2^{alg}(\lambda_2^{alg})^{k-2}+c_3^{alg} b^k)\|\x_S-\x(0)\|_2\nonumber\\
&&+(c_4^{alg}+c_5^{alg}(\lambda_1^{alg})^{k-2}+c_6^{alg}(\lambda_2^{alg})^{k-2}+c_7^{alg}b^k)\|\e^{\prime}\|_2.
\end{eqnarray}
where
\begin{eqnarray*}
&&\rho^{IAD}=\frac{2(1-\sqrt{3}(|1-\mu|+\mu\delta_{3s}))}{\sqrt{3}\mu(1+\delta_{3s})},
b_1^{IAD}=\sqrt{3}(|1-\mu|+\mu\delta_{3s}),  b_2^{IAD}=\frac{\sqrt{3}\mu|1-\gamma|(1+\delta_{3s})}{2};
\end{eqnarray*}
\begin{eqnarray*}
&&
\rho^{NIAD}=\frac{2(1-(2\sqrt{3}+1)\delta_{3s})}{\sqrt{3}(1-\delta_{3s})^2(1+\delta_{3s})},
b_1^{NIAD}=\frac{2\sqrt{3}\delta_{3s}}{1-\delta_{3s}}, \quad b_2^{NIAD}=\frac{\sqrt{3}|1-\gamma|(1+\delta_{3s})}{2(1-\delta_{3s})};
\end{eqnarray*}
\begin{eqnarray*}
&&
\rho^{ADP}=\frac{\sqrt{2}(\sqrt{1-\delta_{3s}^2}-\sqrt{2}\delta_{3s})}{1+\delta_{3s}},
b_1^{ADP}=\sqrt{\frac{2\delta_{3s}^2}{1-\delta_{3s}^2}}, \quad b_2^{ADP}=|1-\gamma|\sqrt{\frac{1+\delta_{3s}}{2(1-\delta_{3s})}};
\end{eqnarray*}
\be
b=\frac{1}{1+\gamma}<1;\nonumber
\ee
\begin{eqnarray*}
\lambda_1^{alg}=\frac{(b+b_1^{alg})+\sqrt{(b-b_1^{alg})^2+4 bb_2^{alg}}}{2}<1,  \lambda_2^{alg}=\frac{(b+b_1^{alg})-\sqrt{(b-b_1^{alg})^2+4bb_2^{alg}}}{2}<1.
\end{eqnarray*}
The coefficients $c_1^{alg},c_2^{alg},\cdots,c_7^{alg}$ are positive constants or  $\cO(k)$ numbers and don't influence whether the corresponding algorithms converge or not, which will be given in the supplementary.
\end{theorem}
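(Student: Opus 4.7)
The plan is to reduce the three-variable iteration of IAD/NIAD/ADP to a two-dimensional linear recursion in the pair $(\epsilon(k), \eta(k))$, where $\epsilon(k) := \|\x_S - \x(k)\|_2$ and $\eta(k)$ is the $(\delta_{3s},3s)$-RIP-restricted norm of $\u(k)$, and then diagonalize the resulting $2\times 2$ system. The contraction factor on $\x(k+1)$ will come from the corresponding IHT/NIHT/HTP analysis, while the genuinely new ingredient is the coupling with $\u(k)$ and the geometrically decaying driving term $\v(k)$.

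\textbf{Step 1 (single-step error bound).} Starting from $\x(k+1) = H_s(\x(k) + \mu(-\nabla f(\x(k)) + \u(k) - \v(k)))$ and using $\b = \A\x_S + \e'$, I would rewrite
\begin{equation*}
\x(k) + \mu(-\nabla f(\x(k)) + \u(k) - \v(k)) - \x_S = (\I - \mu \A^T\A)(\x(k) - \x_S) + \mu \A^T\e' + \mu(\u(k) - \v(k)).
\end{equation*}
Restricting to $T := S \cup \supp(\x(k)) \cup \supp(\x(k+1))$, which has size at most $3s$, and applying the standard hard-thresholding estimate (for IAD/NIAD) or the support-restricted least-squares estimate (for ADP) together with the $(\delta_{3s}, 3s)$-RIP, one should obtain
\begin{equation*}
\epsilon(k+1) \le b_1^{\rm alg}\, \epsilon(k) + \mu^{\rm alg}\bigl(\|\u(k)\|_{2,T} + \|\v(k)\|_2\bigr) + \kappa_1^{\rm alg}\,\|\e'\|_2,
\end{equation*}
with $b_1^{\rm alg}$ precisely the constants listed in the theorem. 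These recover, for IAD, NIAD and ADP respectively, the contraction constants already appearing in the IHT, NIHT and HTP analyses, after carrying the extra additive terms $\u(k), \v(k)$ through the thresholding/projection step.

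\textbf{Steps 2--3 (auxiliary recursions, 2D system, and closed form).} From $\u(k+1) = -\frac{1-\gamma}{2(1+\gamma)} \nabla f(\x(k)) + b\,\u(k)$ with $b := 1/(1+\gamma)$ and $\nabla f(\x(k)) = \A^T\A(\x(k) - \x_S) - \A^T\e'$, the same RIP estimate on a $3s$-support yields $\|\u(k+1)\|_{2,T} \le b_2^{\rm alg}\, \epsilon(k) + b\,\|\u(k)\|_{2,T} + \kappa_2^{\rm alg}\,\|\e'\|_2$, where $b_2^{\rm alg}$ is the second constant in the statement; and $\v(k+1) = b\,\v(k)$ is purely geometric of rate $b$, entering as an inhomogeneous driving term. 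Combining with Step 1 gives the componentwise vector inequality
\begin{equation*}
\begin{pmatrix} \epsilon(k+1) \\ \|\u(k+1)\|_{2,T} \end{pmatrix} \le M^{\rm alg}\begin{pmatrix} \epsilon(k) \\ \|\u(k)\|_{2,T} \end{pmatrix} + \Phi(k), \qquad M^{\rm alg} := \begin{pmatrix} b_1^{\rm alg} & \mu^{\rm alg} \\ b_2^{\rm alg} & b \end{pmatrix},
\end{equation*}
where $\Phi(k)$ collects a constant noise part proportional to $\|\e'\|_2$ and a $b^k$-decaying $\v$-part. The characteristic polynomial of $M^{\rm alg}$ is $\lambda^2 - (b+b_1^{\rm alg})\lambda + (b\, b_1^{\rm alg} - \mu^{\rm alg} b_2^{\rm alg})$, whose roots are exactly $\lambda_1^{\rm alg}$ and $\lambda_2^{\rm alg}$ after identifying $\mu^{\rm alg} b_2^{\rm alg}$ with the $b\, b_2$ appearing in the stated square-root discriminant. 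Unrolling via the eigendecomposition (or, if two of $\lambda_1^{\rm alg}, \lambda_2^{\rm alg}, b$ coincide, via the corresponding Jordan form, which is where the $\cO(k)$ factors in the constants $c_i^{\rm alg}$ enter) and using the initial values $\epsilon(0), \|\u(1)\| = 0, \|\v(1)\| = \tfrac{\gamma}{2(1+\gamma)}\|\A^T(\b-\A\x(0))\|$ produces the three exponentially decaying contributions $\lambda_1^{k-2},\lambda_2^{k-2},b^k$ and the explicit constants $c_i^{\rm alg}$. Finally, $\lambda_1^{\rm alg} < 1$ is equivalent to $(b+b_1^{\rm alg}) + \sqrt{(b-b_1^{\rm alg})^2 + 4bb_2^{\rm alg}} < 2$, which after elementary rearrangement reduces to the hypothesis $\rho^{\rm alg} > |1-\gamma|/\gamma$.

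\textbf{Main obstacle.} The hardest part will be Step 1 uniformly across the three algorithms: IAD uses a fixed stepsize $\mu$, NIAD uses an adaptive $\mu(k+1)$ whose numerator and denominator must be sandwiched by RIP constants in order to recover the stated $b_1^{\rm NIAD} = 2\sqrt{3}\delta_{3s}/(1-\delta_{3s})$, and ADP performs an extra least-squares debiasing on $\supp(\w(k+1))$ that introduces an orthogonal-projection estimate in place of the thresholding contraction. Each demands a slightly different argument, and the new terms $\u(k), \v(k)$ must be propagated through the threshold/projection without degrading the contraction constants known from IHT/NIHT/HTP. Once Step 1 is secured with the correct constants, Steps 2--3 are essentially linear algebra and the final bound follows directly.
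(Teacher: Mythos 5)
Your proposal is correct in substance but organizes the argument differently from the paper. The paper never introduces a coupled state $(\epsilon(k),\|\u(k)\|)$: instead it works from the fully unrolled form \eqref{eq:14}, obtains the scalar full-memory recursion $a(k+1)\le b_1 a(k)+b_2\sum_{i=1}^{k-1}b^{k-i}a(i)+b_3b^k+b_4$ with $a(k)=\|\x_S-\x(k)\|_2$, and then solves that recursion by a separate combinatorial lemma (Lemma \ref{lem:2}) which tracks the coefficients $c(j),d(j)$ of $a(j)$ and $\sum_i b^{j-i}a(i)$ under back-substitution; the $2\times 2$ matrix $\left[\begin{smallmatrix}b_1 & b\\ b_2 & b\end{smallmatrix}\right]$ and the eigenvalues $\lambda_{1,2}^{alg}$ emerge there. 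Your route recognizes upfront that $\u(k)$ \emph{is} the memory term $\frac{1-\gamma}{2}\sum_i(1+\gamma)^{-(k-i)}\A^T\c(i)$ and diagonalizes the two-dimensional system directly; after rescaling the second coordinate so the off-diagonal product equals $b\,b_2^{alg}$, you get the identical characteristic polynomial, and your reduction of $\lambda_1^{alg}<1$ to $(1-b_1^{alg})(1-b)>b\,b_2^{alg}$ and thence to $\rho^{alg}>|1-\gamma|/\gamma$ matches the paper exactly. Your Step 1 estimates (the $\sqrt{3}$ Foucart-type thresholding bound, the sandwich $\frac{1}{1+\delta_{3s}}\le\mu(k+1)\le\frac{1}{1-\delta_{3s}}$ for NIAD, and Lemma \ref{lem:orthogonality-rip} for ADP) are precisely what the paper uses. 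Two details deserve care if you write this out. First, $\|\u(k)\|_{2,T}$ with $T$ depending on $k$ is not a single scalar sequence, so the recursion $\eta(k+1)\le b_2^{alg}\epsilon(k)+b\,\eta(k)+\cdots$ only closes if you define $\eta(k)$ as a restricted norm uniform over all admissible supports (or unroll $\u(k)$ term by term as the paper does, restricting each $\A^T\A(\x_S-\x(i))$ to a set of size at most $3s$ separately — using one fixed $T$ of size $3s$ for every past term would push you to $\delta_{4s}$). Second, the paper treats $k=0$ and $k=1$ as separate base cases with their own constants $b_8^{alg},b_9^{alg}$ (since $\u(1)=\0$ and the initialization uses stepsize $\mu/2$), which is where the indices $k-2$ and the split among $c_1^{alg},\dots,c_7^{alg}$ come from; your "using the initial values" glosses over this but it is routine. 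The payoff of your organization is a shorter and more transparent derivation; the paper's payoff is a self-contained solvable lemma for the full-memory scalar recursion that does not require tracking restricted norms of the auxiliary iterate.
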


The above theorem says that the convergence rate of alg from \{IAD,NIAD,ADP\} is determined jointly by $\delta_{3s},\mu\; \rm and\; \gamma$. If $\gamma=1$, IAD, NIAD and ADP will converge if $\delta_{3s}<\frac{\sqrt{3}}{3}-|1-\mu|, \delta_{3s}<\frac{2\sqrt{3}-1}{11}\approx0.224,\delta_{3s}<\frac{\sqrt{3}}{3}\approx 0.5773$ respectively, which are equivalent to the theoretical guarantees of IHT, NIHT and HTP in order. If $\gamma\neq 1$, the bounds of the proposed algorithms on $\delta_{3s}$ will be stricter than the above bounds, but are still positive constants. In the exact reconstruction case, i.e, $\x$ is $s$-sparse and there is no noise, we have the following corollary.

%
%

\begin{corollary}
When $\x$ is $s$-sparse with support set $S$ and there is no noise, denote $x_{\min}=\min\{|x_i|,i\in S\}$ and $\lambda^{alg}=\max\{\lambda_1^{alg},\lambda_2^{alg},b\}$, where alg is from \{IAD,NIAD,ADP\}. Then alg will find  the support $S$ of $\x$ exactly after
\[
\frac{\ln (x_{\min}/\|\x(0)-\x\|_2)-\ln(c_1^{alg}+c_2^{alg}+c_3^{alg}b^2)}{\ln \lambda^{alg}}+3\]
iterations.
\end{corollary}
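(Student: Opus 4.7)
My plan is to deduce the corollary as a direct specialization of Theorem~\ref{thm:1}. In the exact-sparse, noiseless regime, $\x_{\overline{S}}=\0$ and $\e=\0$, so the total perturbation $\e^{\prime}=\A\x_{\overline{S}}+\e$ vanishes. The entire second block of the Theorem~\ref{thm:1} bound therefore drops out, leaving
\begin{equation*}
\|\x_S-\x(k+1)\|_2\le \bigl(c_1^{alg}(\lambda_1^{alg})^{k-2}+c_2^{alg}(\lambda_2^{alg})^{k-2}+c_3^{alg}b^{k}\bigr)\|\x_S-\x(0)\|_2.
\end{equation*}
With $\lambda^{alg}=\max\{\lambda_1^{alg},\lambda_2^{alg},b\}<1$, I would consolidate the three geometric factors by writing $b^{k}=b^{2}\cdot b^{k-2}\le b^{2}(\lambda^{alg})^{k-2}$ and $(\lambda_i^{alg})^{k-2}\le (\lambda^{alg})^{k-2}$, obtaining the clean single-rate bound
\begin{equation*}
\|\x_S-\x(k+1)\|_2\le (c_1^{alg}+c_2^{alg}+c_3^{alg}b^{2})(\lambda^{alg})^{k-2}\|\x(0)-\x\|_2.
\end{equation*}

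Next, I would show that exact support identification is forced once this error drops below $x_{\min}$. Each of IAD, NIAD and ADP outputs an $s$-sparse iterate $\x(k+1)$: the first two via the thresholding operator $H_s$, and ADP via its least-squares projection onto a support of size $s$ produced by $H_s$. Hence if $\supp(\x(k+1))\neq S$, then because both supports have cardinality at most $s$ there exists $i\in S$ with $x(k+1)_i=0$, giving
\begin{equation*}
\|\x_S-\x(k+1)\|_2\ge |x_i|\ge x_{\min}.
\end{equation*}
By contrapositive, $\|\x_S-\x(k+1)\|_2<x_{\min}$ already implies $\supp(\x(k+1))=S$, which, once the values on the correct support are essentially recovered, completes support identification.

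Finally, I would solve the resulting inequality $(c_1^{alg}+c_2^{alg}+c_3^{alg}b^{2})(\lambda^{alg})^{k-2}\|\x(0)-\x\|_2<x_{\min}$ for $k$. Taking logarithms and using $\ln \lambda^{alg}<0$ (which reverses the inequality), I obtain
\begin{equation*}
k>2+\frac{\ln(x_{\min}/\|\x(0)-\x\|_2)-\ln(c_1^{alg}+c_2^{alg}+c_3^{alg}b^{2})}{\ln \lambda^{alg}},
\end{equation*}
so that the bound stated in the corollary (with the extra ``$+3$'' absorbing the ``$+2$'' plus one rounding-up step) is indeed sufficient. The main delicacy to watch for is the bookkeeping of the constants: some of the $c_i^{alg}$ are declared to be $\cO(k)$ rather than strict constants, so one must verify that this polynomial-in-$k$ growth is dominated by the geometric decay $(\lambda^{alg})^{k-2}$ and therefore only contributes a lower-order additive correction to the iteration count. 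Because $\lambda^{alg}<1$, this absorption is harmless, and the logarithmic estimate above stands.
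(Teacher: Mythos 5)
Your proposal is correct and follows essentially the same route as the paper's own proof: specialize Theorem~\ref{thm:1} to the noiseless exact-sparse case, dominate the three decaying terms by $(c_1^{alg}+c_2^{alg}+c_3^{alg}b^2)(\lambda^{alg})^{k-2}$, invoke the $x_{\min}$ threshold to force $S(k+1)=S$, and solve the logarithmic inequality for $k$. You actually supply more detail than the paper at two points it leaves implicit --- the cardinality argument showing why error below $x_{\min}$ forces support recovery, and the caveat about the $\cO(k)$ growth of some $c_i^{alg}$ being absorbed by the geometric decay.
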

\section{Experiments}



%

 In this section, we show the empirical
 performance of IAD, NIAD and ADP by comparing the
 exact reconstruction rate with the corresponding IHT, NIHT and HTP algorithms.
 By comparing the maximal sparsity level of the underlying sparse signals at which the perfect reconstruction is ensured (\cite{dai2009subspace} called this point critical sparsity), accuracy of the reconstruction can be compared empirically.
 In each trial, we construct an $m\times n(m=200,n=1000)$ measurement matrix $\A$ with entries drawn independently from Gaussian distribution $\cN(0,\frac{1}{m})$. In addition, an $s$-sparse vector $\x$ whose support is chosen at random. CARS signals and Gaussian signals are considered. Each nonzero element of Gaussian signals is drawn from standard Gaussian distribution and that of CARS signals is  from the set $\{1,-1\}$ uniformly at random. The sparsity level  ranges in $[1,60]$ in CARS signal case and $[1,100]$ in Gaussian signal case.
  For each reconstruction algorithm,  1000 independent trials are performed and  the exact reconstruction rate is plotted  in $y$-axis as the sparsity $s$ changes in $x$-axis.
 For IAD and IHT,
   two representative $\mu$ are set : $\mu=1$ and $\mu=1/3$ in each trial and the corresponding algorithms are called $\rm IAD^{1}$,  $\rm IHT^{1}$ and $\rm IAD^{1/3}$, $\rm IHT^{1/3}$ respectively. For IAD, NIAD and ADP,  $\gamma=0.1$ is  set simply to show the improved empirical performance to the corresponding IHT, NIHT and HTP. One can tune $\gamma$ to acquire a relatively better empirical performance, which is not our focus here. Eight subfigures are plotted and each subfigure tests a couple of  algorithms from ($\rm IAD^{1}$, $\rm IHT^{1}$), ($\rm IAD^{1/3}$, $\rm IHT^{1/3}$), (NIAD, NIHT), (ADP, HTP) in  CARS  or Gaussian signal cases (the top four are CARS signal cases and the bottom four are  Gaussian signal cases). Meanwhile, a well-known implementation of $\ell_1$ minimization, $\ell_1$-magic (\url{http://users.ece.gatech.edu/~justin/l1magic/}), is used as a base-line algorithm to compare the performance among different couples of algorithms. In order to balance
time complexity and computational accuracy, the stopping criterion ``$k\ge400\; \rm or \; \frac{\|\y-\A\x(k+1)\|_2}{\|\y\|_2}\le10^{-6}$'' is used for each algorithm. However, we use a different criterion ``the support $S(k+1)$ of $\x(k+1)$ is equal to the support $S$ of the original $\x$''  to judge whether the reconstruction is exact or not. Because if $S(k+1)=S$, one can acquire the exact $\x$ by a posteriori least squares fit on $S(k+1)$.

In Fig. \ref{img:1}, one can see that $\rm IAD^{1}$, $\rm IAD^{1/3}$, NIAD and ADP all outperform the coressponding algorithms $\rm IHT^{1}$, $\rm IHT^{1/3}$, NIHT and HTP greatly. In addition, the empirical performance of $\rm IAD^{1/3}$, NIAD and ADP are much better than that of $\ell_1$ minimization in the Gaussian signal case and can approach that of $\ell_1$ minimization in the CARS signal case, which is rare for iterative greedy algorithms in compressed sensing.

\begin{figure}[ht]
  \centering
  \includegraphics[scale=0.32]{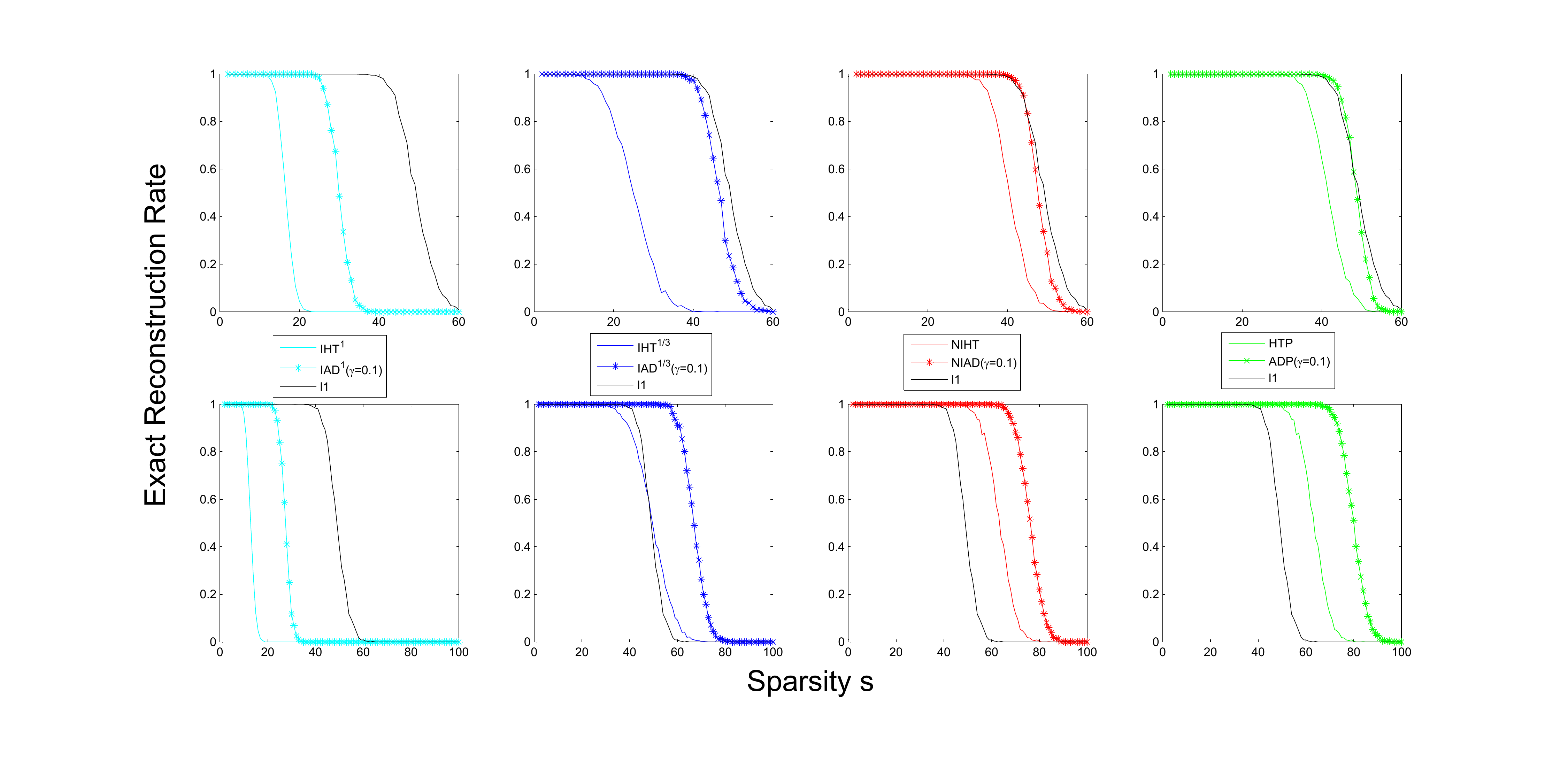}\\
  \caption{Comparison of $\rm IAD^{1}$ and $\rm IHT^{1}$, $\rm IAD^{1/3}$ and $\rm IHT^{1/3}$, NIAD and NIHT, ADP and HTP respectively from left to right (Top: CARS signals; bottom: Gaussian signals.).}\label{img:1}
\end{figure}

Table \ref{tb:1} lists the critical sparsity of the four pairs of algorithms  both in the Gaussian and CARS signal cases (row 2 and row 4 respectively). For each pair,  relative gains are calculated in both cases (row 3 and row 5 respectively). On the one hand, in Table \ref{tb:1}, one can know that relative to the corresponding ones, because the reconstruction capabilities of $\rm IHT^{1}$ and $\rm IHT^{1/3}$  are too restricted, $\rm IAD^{1}$ and $\rm IAD^{1/3}$ can acquire higher relative gains  than NIAD and ADP. While the improvements of NIAD and ADP relative to NIHT and HTP respectively are also substantial. On the other hand, in terms of critical sparsity, the variants NIAD and ADP have better empirical performance than $\rm IAD^{1}$ and $\rm IAD^{1/3}$, which is similar to the improved performance by NIHT and HTP relative to $\rm IHT^{1}$ and $\rm IHT^{1/3}$.

\begin{table*}[th]
\caption{The critical sparsity and relative gains of the four pairs of algorithms  in the CARS and Gaussian signal cases.}\label{tb:1}
\centering
\begin{tabular}{|c|c|c|c|c|c|c|c|c|}
\hline
Algorithms & $\rm IHT^{1}$ & $\rm IAD^{1}$ & $\rm IHT^{1/3}$ & $\rm IAD^{1/3}$ & NIHT & NIAD & HTP & ADP  \\

\hline
CARS Signals & 10 & 23 & 10 & 36 & 28 & 38 & 29 & 38  \\
\hline
Relative Gains & \multicolumn{2}{|c|}{130.0\%} &  \multicolumn{2}{|c|}{260.0\%} & \multicolumn{2}{|c|}{35.7\%} & \multicolumn{2}{|c|}{34.5\%}\\
\hline\hline
Gaussian Signals & 7 & 20 & 24 & 52 & 45 & 61 & 45 & 66 \\
\hline
Relative Gains & \multicolumn{2}{|c|}{185.7\%} &  \multicolumn{2}{|c|}{116.7\%} & \multicolumn{2}{|c|}{35.6\%} & \multicolumn{2}{|c|}{46.7\%} \\
\hline
\end{tabular}
\end{table*}

%
%

\section{Conclusions}
In this paper, we presented three alternating direction algorithms for $\ell_0$ regularization, called ``iterative alternating direction'' (IAD), ``normalized iterative alternating direction'' (NIAD) and ``alternating direction pursuit'' (ADP). They have provable theoretical guarantees and good empirical performance relative to the corresponding IHT, NIHT and HTP algorithms. However, the optimal value of $\gamma$ needs to be investigated further.

\section*{Supplementary: the proofs of lemmas and theorems in main body}
\subsection{\label{proof:1}Proof of Lemma \ref{lem:1}}

The lemma is proved by recursively using the condition $b(k+2)=c_1 b(k+1)+c_2 a(k+1)+c_3 a(k),k\in\{0,1,2,\cdots\}$. For $k\in\{1,2,3,\cdots\}$, we have
\begin{eqnarray*}
b(k+1)&=&c_1 b(k)+c_2 a(k) +c_3 a(k-1) \\
&=&c_1(c_1 b(k-1)+c_2 a(k-1)+c_3 a(k-2))+c_2 a(k)+c_3 a(k-1)\\
&=&c_1^2b(k-1)+c_2a(k)+(c_1 c_2+c_3)a(k-1)+c_1 c_3a(k-2)\\
&=&c_1^2(c_1b(k-2)+c_2a(k-2)+c_3a(k-3))+c_2a(k)+(c_1 c_2+c_3)a(k-1)+c_1 c_3a(k-2)\\
&=&c_1^3b(k-2)+c_2a(k)+(c_1c_2+c_3)a(k-1)+c_1(c_1c_2+c_3)a(k-2)+c_1^2c_3a(k-3)\\
&\vdots\\
&=&c_1^k b(1)+c_2a(k)+(c_1 c_2+c_3)\sum_{i=1}^{k-1}c_1^{k-1-i}a(i)+c_1^{k-1}c_3a(0).
\end{eqnarray*}
\subsection{\label{lem:2}Lemma \ref{lem:2} and its proof}
The following lemma is critical for the proof of Theorem \ref{thm:1}.
\begin{lemma}
\label{lem:2}
For a series $\{a(k)\}$, $a(k)\ge0$,  for all $k\in\{0,1,2,\cdots\}$. If $a(k+1)\le b_1 a(k)+b_2 \sum_{i=1}^{k-1}b^{k-i}a(i)+b_3 b^k +b_4,k\in\{1,2,3,\cdots\}$, $b, b_1,b_2,b_3,b_4\ge0$, then for $k\in\{2,3,4,\cdots\}$,
\begin{eqnarray}
a(k+1)&\le &\left(\left(\omega_1+\frac{b_1}{2}\right)\lambda_1^{k-2}-\left(\omega_1-\frac{b_1}{2}\right)\lambda_2^{k-2}\right) a(2)\nonumber\\
 &&
+ \left(\left(\omega_2+\frac{b_2}{2}\right)\lambda_1^{k-2}-
\left(\omega_2-\frac{b_2}{2}\right)\lambda_2^{k-2}\right)
  a(1)\nonumber\\
 && +\left( b^{k-2}+ \left(\omega_1+\frac{b_1}{2}\right)\theta_1-\left(\omega_1-\frac{b_1}{2}\right)\theta_2\right)b_3 b^2\nonumber\\
 && +\left(1+\left(\omega_1+\frac{b_1}{2}\right)\frac{1-\lambda_1^{k-2}}{1-\lambda_1}-\left(\omega_1-\frac{b_1}{2}\right)\frac{1-\lambda_2^{k-2}}{1-\lambda_2}\right)b_4,\nonumber
\end{eqnarray}
where
\begin{eqnarray}
\lambda_1=\frac{(b+b_1)+\sqrt{(b-b_1)^2+4 bb_2}}{2}&,& \quad \lambda_2=\frac{(b+b_1)-\sqrt{(b-b_1)^2+4bb_2}}{2}; \nonumber\\
\theta_1=\left\{
  \begin{array}{cc}
    \frac{\lambda_1^{k-2}-b^{k-2}}{\lambda_1-b}, & \lambda_1\neq b \\
    (k-2)\lambda_1^{k-3}, & \lambda_1=b\\
  \end{array}
\right. &,&\quad
\theta_2=\left\{
  \begin{array}{cc}
    \frac{\lambda_2^{k-2}-b^{k-2}}{\lambda_2-b}, & \lambda_2\neq b \\
    (k-2)\lambda_2^{k-3}, & \lambda_2=b\\
  \end{array}
\right. ;\nonumber\\
\omega_1=\frac{-b b_1+b_1^2+2bb_2 }{2\sqrt{(b-b_1)^2+4 bb_2}}&,& \quad
\omega_2=\frac{(b+b_1)b_2}{2\sqrt{(b-b_1)^2+4 bb_2}}.\nonumber
\end{eqnarray}
$a(k+1)$ will be upper bounded if $(1-b)(1-b_1)>bb_2$ and $0<b<1$, and will approach $0$ as $k\rightarrow\infty$ if $(1-b)(1-b_1)>bb_2,0<b<1$ and $b_4=0$.
\end{lemma}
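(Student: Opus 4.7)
The plan is to reduce the history-dependent inequality to a standard second-order linear recurrence with two-term memory and then solve it via its characteristic polynomial.

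First, I would introduce a majorant sequence $\bar a$ satisfying the hypothesis with equality and $\bar a(1) = a(1)$; because every coefficient is nonnegative, a direct induction yields $a(k)\le \bar a(k)$ for all $k$. Writing $S(k)=\sum_{i=1}^{k-1}b^{k-i}\bar a(i)$, the elementary identity $S(k) = b\,S(k-1) + b\,\bar a(k-1)$ (for $k\ge 2$) invites subtracting $b$ times the defining relation at index $k-1$ from that at $k$. Crucially, both the summation (via $S(k)-bS(k-1)=b\,\bar a(k-1)$) and the geometric forcing $b_3 b^k$ cancel in this subtraction (since $b\cdot b_3 b^{k-1}=b_3 b^k$), producing the clean second-order recurrence
\[
\bar a(k+1) = (b+b_1)\,\bar a(k) + b(b_2-b_1)\,\bar a(k-1) + (1-b)\,b_4 \qquad (k\ge 2),
\]
with initial values $\bar a(1)=a(1)$ and $\bar a(2)=b_1 a(1)+b_3 b+b_4$.

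Second, I would solve this recurrence via its characteristic polynomial $\lambda^2 - (b+b_1)\lambda + b(b_1-b_2) = 0$; its discriminant $(b-b_1)^2+4bb_2$ is manifestly nonnegative, so both roots are real and coincide with the $\lambda_1,\lambda_2$ in the statement. The general solution is $\bar a(k)=A\lambda_1^{k-1}+B\lambda_2^{k-1}+p$, where $p=(1-b)b_4/[(1-b)(1-b_1)-bb_2]$ is the constant particular solution. Computing $A,B$ from a $2\times 2$ system on the initial data and then reparametrizing the resulting bound so that $a(1)$, $a(2)$, $b_3$, $b_4$ appear as four independent inputs (extracting the $b_3$-contribution from $\bar a(2)$ via $\bar a(2)=b_1 a(1)+b_3 b+b_4$) produces the stated form: the coefficients $\omega_1\pm b_1/2$ and $\omega_2\pm b_2/2$ are the closed-form solutions of that $2\times 2$ system; the $b_3$-response is the convolution of the forcing $b_3 b^k$ with the recurrence's Green function, yielding $\theta_i=(\lambda_i^{k-2}-b^{k-2})/(\lambda_i-b)$ (the degenerate case $\lambda_i=b$ is handled by a limiting argument, giving $(k-2)\lambda_i^{k-3}$); the $b_4$-response gives the partial geometric sum $(1-\lambda_i^{k-2})/(1-\lambda_i)$.

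Third, I would verify the asymptotic statements. The bound stays bounded precisely when $|\lambda_1|,|\lambda_2|,b<1$; using Vieta's relations $\lambda_1+\lambda_2=b+b_1$ and $\lambda_1\lambda_2=b(b_1-b_2)$, evaluating the characteristic polynomial at $\lambda=1$ shows that $|\lambda_i|<1$ is equivalent to $(1-b)(1-b_1)>bb_2$ (given $b+b_1<2$, which follows from $0<b<1$). Under these conditions every exponential term vanishes as $k\to\infty$ and $\bar a(k)$ approaches the particular solution $p$; if $b_4=0$ then $p=0$, so $\bar a(k)$, and consequently $a(k)$, tends to zero. The main obstacle will be the algebraic bookkeeping of the reparametrization in the second step: carefully deriving the precise forms of $\omega_1,\omega_2$ from the $2\times 2$ system, cleanly isolating the separate $a(2)$- and $b_3$-responses, and handling the degenerate root case $\lambda_i=b$ in the $\theta_i$ definition.
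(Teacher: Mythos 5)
Your overall strategy is sound and genuinely different from the paper's. The paper never forms a closed recurrence for the sequence itself: it back-substitutes the hypothesis into its own highest-order term and tracks the coefficient vectors $(c(j),d(j))$, which propagate by the matrix $\left[\begin{smallmatrix} b_1 & b\\ b_2 & b\end{smallmatrix}\right]$; diagonalizing that matrix produces $\lambda_1,\lambda_2,\omega_1,\omega_2$, and two auxiliary first-order recursions for $e(j),l(j)$ produce the $\theta_i$ and the geometric-sum factors. Your route --- majorize, then kill the memory term by subtracting $b$ times the relation at $k-1$ to obtain $\bar a(k+1)=(b+b_1)\bar a(k)+b(b_2-b_1)\bar a(k-1)+(1-b)b_4$ --- is correct (the characteristic polynomial $\lambda^2-(b+b_1)\lambda+b(b_1-b_2)$ is exactly the paper's $|\lambda\I-\B|$, so the same eigenvalues appear) and is arguably cleaner, since the boundedness criterion $(1-b)(1-b_1)>bb_2$ drops out of evaluating this polynomial at $\lambda=1$. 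One internal inconsistency to fix: you say the forcing $b_3b^k$ cancels in the subtraction \emph{and} that $\theta_i$ arises as the convolution of that forcing with the Green's function of the reduced recurrence. In your reduction there is no such convolution; the $b_3$-dependence enters only through the initial datum, so you will obtain a pure combination of powers of $\lambda_1,\lambda_2$ that is algebraically equal to, but does not visibly match, the stated $b^{k-2}+(\omega_1+b_1/2)\theta_1-(\omega_1-b_1/2)\theta_2$; reconciling the two forms is nontrivial extra algebra.

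The genuine gap is the anchoring of the majorant. With $\bar a(1)=a(1)$ and equality imposed from $k=1$ onward, you are forced to take $\bar a(2)=b_1a(1)+b_3b+b_4$, which in general strictly exceeds $a(2)$. Your final bound is then a function of $a(1),b_3,b_4$ only; the lemma, by contrast, bounds $a(k+1)$ by a linear function of the \emph{actual} value $a(2)$ (with a nonnegative coefficient), which is a strictly stronger statement whenever $a(2)<b_1a(1)+b_3b+b_4$. You cannot recover it by ``reparametrizing'': replacing $\bar a(2)$ by the smaller quantity $a(2)$ inside an upper bound with a nonnegative coefficient is not a valid inference. The repair is easy but must be stated: anchor the majorant at \emph{two} points, $\bar a(1)=a(1)$ and $\bar a(2)=a(2)$, impose the defining equality only for $k\ge 2$ (so that $\bar a(3)=b_1a(2)+b_2ba(1)+b_3b^2+b_4$), and note that the subtraction trick then yields the two-term recurrence only for $k\ge 3$, with $\bar a(2),\bar a(3)$ as the initial data from which $a(2)$, $a(1)$, $b_3$ and $b_4$ enter as independent inputs. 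With that modification, and the algebraic bookkeeping you already flag, your argument does prove the lemma.
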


\begin{proof}
By recursively applying the condition $a(k+1)\le b_1 a(k)+b_2 \sum_{i=1}^{k-1}b^{k-i}a(i)+b_3 b^k +b_4,k\in\{2,3,4,\cdots\}$ to the highest-order term of  the right hand side, one has
\begin{eqnarray*}
a(k+1)&\le& b_1 a(k)+b_2 \sum_{i=1}^{k-1}b^{k-i}a(i)+b_3 b^k +b_4 \\
&\le& b_1(b_1 a(k-1)+b_2\sum_{i=1}^{k-2}b^{k-1-i}a(i)+b_3 b^{k-1}+b_4)+b_2\sum_{i=1}^{k-1}b^{k-i}a(i)+b_3 b^k +b_4\\
&=& (b_1^2+b_2 b)a(k-1)+(b_1 b_2+b_2 b)\sum_{i=1}^{k-2}b^{k-1-i}a(i)+(b_1 b_3+b_3 b)b^{k-1}+(b_1 +1)b_4\\
&\le& (b_1(b_1^2+b b_2)+b(b_1 b_2+b b_2))a(k-2)\\
&&+(b_2(b_1^2+b b_2)+b(b_1 b_2+b b_2))\sum_{i=1}^{k-3}b^{k-2-i}a(i)\\
&&+(b_3(b_1^2+b b_2)+b(b_1 b_3+b b_3))b^{k-2}+((b_1^2+b b_2)+b_1+1)b_4 \\
&\vdots&\\
&\le&c(j)a(j)+d(j)\sum_{i=1}^{j-1}b^{j-i}a(i)+e(j)b^j+l(j)b_4.
\end{eqnarray*}
When the highest-order is $j$ in the right hand, the above four series $\{c(j)\},\{d(j\},\{e(j)\},\{l(j)\}$ are the coefficients of $a(j),\sum_{i=1}^{j-1}b^{j-i}a(i), b^{j}, b_4$ respectively. According to the recursive procedure, one has
\be
c(k)=b_1,d(k)=b_2,e(k)=b_3,l(k)=b_4;\\
\ee
and
\begin{eqnarray}
c(j)&=&b_1 c(j+1)+bd(j+1), \label{eq:cj}\\
d(j)&=&b_2 c(j+1)+bd(j+1),  \label{eq:dj}\\
e(j)&=&b_3 c(j+1)+be(j+1),  \label{eq:ej}\\
l(j)&=&c(j+1)+l(j+1).   \label{eq:lj}
\end{eqnarray}
According to \eqref{eq:cj} and \eqref{eq:dj}, one has
\begin{equation}       
\left[                
  \begin{array}{c}   
    c(j) \\  
    d(j) \\  
  \end{array}
\right] =   \left[                
  \begin{array}{cc}   
    b_1 & b \\  
    b_2 & b \\  
  \end{array}
\right]\left[                
  \begin{array}{c}   
    c(j+1) \\  
    d(j+1) \\  
  \end{array}
\right].       \label{eq:cjdj}       
\end{equation}
Define
\be
\B=\left[                
  \begin{array}{cc}   
    b_1 & b \\  
    b_2 & b\\  
  \end{array}
\right],\nonumber
\ee
By solving  the characteristic polynomial
\be
|\lambda\I-\B|=\left|                
  \begin{array}{cc}   
    b_1 & b \\  
    b_2 & b\\  
  \end{array}
\right|=\lambda^2-(b+b_1)\lambda+b b_1-bb_2=0, \nonumber
\ee
one has two eigenvalues
\be
\lambda_1=\frac{(b+b_1)+\sqrt{(b-b_1)^2+4b b_2}}{2}, \lambda_2=\frac{(b+b_1)-\sqrt{(b-b_1)^2+4b b_2}}{2},  \nonumber
\ee
and two  eigenvectors
\begin{eqnarray*}
\e_1=\left[                
  \begin{array}{c}   
    b \\  
    u+v \\  
  \end{array}
\right]&,&
\e_2=\left[                
  \begin{array}{c}   
    b \\  
    u-v \\  
  \end{array}
\right]\\
\end{eqnarray*}
corresponding to $\lambda_1$ and $\lambda_2$ respectively, where
\be
u=\frac{b-b_1}{2}\quad,\quad
v=\frac{\sqrt{(b-b_1)^2+4b b_2}}{2}.\nonumber
\ee
So $\B$ can be expressed by eigenvalue decomposition
\begin{eqnarray}
\B&=&[\e_1,\e_2]\left[                
  \begin{array}{cc}   
    \lambda_1 & \\  
    &\lambda_2\\  
  \end{array}
\right][\e_1,\e_2]^{-1} \nonumber\\
&=&\frac{1}{2bv}\left[                
  \begin{array}{cc}   
    b & b\\  
    u+v & u-v\\  
  \end{array}
\right]\left[                
  \begin{array}{cc}   
    \lambda_1 & \\  
    &\lambda_2\\  
  \end{array}
\right]\left[                
  \begin{array}{cc}   
     -u+v & b\\
    u+v & -b\\  
  \end{array}
\right].\label{eq:eigd}
\end{eqnarray}
According to \eqref{eq:cjdj} and \eqref{eq:eigd}, one has
\begin{eqnarray}
\left[                
  \begin{array}{c}   
    c(i) \\  
    d(i) \\  
  \end{array}
\right]&=&\B^{k-i}\left[                
  \begin{array}{c}   
    c(k) \\  
    d(k)\\  
  \end{array}
\right]\nonumber\\
&=&\frac{1}{2bv}\left[                
  \begin{array}{cc}   
    b & b\\  
    u+v & u-v\\  
  \end{array}
\right]\left[                
  \begin{array}{cc}   
    \lambda_1^{k-i} & \\  
    &\lambda_2^{k-i}\\  
  \end{array}
\right]\left[                
  \begin{array}{cc}   
     -u+v & b\\
    u+v & -b\\  
  \end{array}
\right]\left[
  \begin{array}{c}
    b_1 \\
    b_2\\
  \end{array}
\right]\nonumber\\
&=&\frac{1}{2bv}\left[                
  \begin{array}{cc}   
    (-u+v)b\lambda_1^{k-i}+(u+v)b\lambda_2^{k-i} & b^2\lambda_1^{k-i}-b^2\lambda_2^{k-i}\\  
    (-u^2+v^2)\lambda_1^{k-i}+(u^2-v^2)\lambda_2^{k-i}  &  (u+v)b\lambda_1^{k-i}-(u-v)b\lambda_2^{k-i}\\  
  \end{array}\right]\left[
  \begin{array}{c}
    b_1 \\
    b_2\\
  \end{array}
\right]\nonumber\\
&=&\frac{1}{2v}\left[                
  \begin{array}{cc}   
     -(u-v)\lambda_1^{k-i}+(u+v)\lambda_2^{k-i} & b(\lambda_1^{k-i}-\lambda_2^{k-i})\\  
    b_2(\lambda_1^{k-i}-\lambda_2^{k-i})  &  (u+v)\lambda_1^{k-i}-(u-v)\lambda_2^{k-i}\\  
  \end{array}\right]\left[
  \begin{array}{c}
    b_1 \\
    b_2\\
  \end{array}
\right]\nonumber\\
&=&\left[
    \begin{array}{c}
      \left(\omega_1+\frac{b_1}{2}\right)\lambda_1^{k-i}-\left(\omega_1-\frac{b_1}{2}\right)\lambda_2^{k-i}\\
      \left(\omega_2+\frac{b_2}{2}\right)\lambda_1^{k-i}-\left(\omega_2-\frac{b_2}{2}\right)\lambda_2^{k-i}
    \end{array}
\right],\label{eq:cidi}
\end{eqnarray}
where
\be
\omega_1=\frac{-b b_1+b_1^2+2bb_2 }{2\sqrt{(b-b_1)^2+4 b b_2}}, \quad
\omega_2=\frac{(b+b_1)b_2}{2\sqrt{(b-b_1)^2+4 b b_2}}. \nonumber
\ee
Recursively using \eqref{eq:ej}, one has
\begin{eqnarray}
e(2)&=&be(3)+b_3c(3)\nonumber\\
&=&b(be(4)+b_3c(4))+b_3c(3)\nonumber\\
&\vdots&\nonumber\\
&=&b^{k-2}e(k)+b_3\sum_{i=0}^{k-3}b^{i}c(i+3)\nonumber\\
&=&b_3 b^{k-2}+b_3\sum_{i=0}^{k-3}b^{i}\left(\left(\omega_1+\frac{b_1}{2}\right)\lambda_1^{k-i-3}-\left(\omega_1-\frac{b_1}{2}\right)\lambda_2^{k-i-3}\right)\nonumber\\
&=&b_3 b^{k-2}+b_3\left(\omega_1+\frac{b_1}{2}\right)\theta_1-b_3\left(\omega_1-\frac{b_1}{2}\right)\theta_2, \label{eq:e2}
\end{eqnarray}
where
\be
\theta_1=\left\{
  \begin{array}{cc}
    \frac{\lambda_1^{k-2}-b^{k-2}}{\lambda_1-b}, & \lambda_1\neq b \\
    (k-2)\lambda_1^{k-3}, & \lambda_1=b\\
  \end{array}
\right. ,\quad
\theta_2=\left\{
  \begin{array}{cc}
    \frac{\lambda_2^{k-2}-b^{k-2}}{\lambda_2-b}, & \lambda_2\neq b \\
    (k-2)\lambda_2^{k-3}, & \lambda_2=b\\
  \end{array}
\right..  \nonumber
\ee
Recursively using \eqref{eq:lj}, it follows that
\begin{eqnarray}
l(2)&=&c(3)+l(3) \nonumber\\
&=&c(3)+c(4)+\l(4) \nonumber\\
&\vdots& \nonumber\\
&=&l(k)+\sum_{i=0}^{k-3}c(i+3) \nonumber\\
&=&1+\sum_{i=0}^{k-3}\left(\left(\omega_1+\frac{b_1}{2}\right)\lambda_1^{k-i-3}-\left(\omega_1-\frac{b_1}{2}\right)\lambda_2^{k-i-3}\right) \nonumber\\
&=&1+\left(\omega_1+\frac{b_1}{2}\right)\frac{1-\lambda_1^{k-2}}{1-\lambda_1}-\left(\omega_1-\frac{b_1}{2}\right)\frac{1-\lambda_2^{k-2}}{1-\lambda_2}.
\label{eq:l2}
\end{eqnarray}
When the highest-order term is $j=2$, one has
\begin{eqnarray}
a(k+1)\le c(2)a(2)+d(2)a(1)+e(2)b^2+l(2)b_4, k\in\{2,3,4,\cdots\}. \label{eq:t30}
\end{eqnarray}

$a(k+1)$ will be upper bounded if $|\lambda_1|<1, |\lambda_2|<1, 0<b<1$. When $b,b_1,b_2\ge0$, it is equivalent to
\[(1-b_1)(1-b)>bb_2.\]
In addition, if $b_4=0$ holds additionally, $c(2),d(2),e(2),l(2)$ all will approach to 0 as $k\rightarrow\infty$. Applying \eqref{eq:cidi}, \eqref{eq:e2} and \eqref{eq:l2}to \eqref{eq:t30},
Lemma \ref{lem:2} is proved.
\end{proof}

%
\subsection{Some useful lemmas}
 The following three lemmas are used in the derivations of RIC related results.
\begin{lemma}[Consequences of the RIP]$ $
\label{lem:rip}
\begin{enumerate}
\item (Monotonicity \cite{candes2005decoding}) For any two positive integers $s\le s^{\prime}$, \quad $\delta_s\le\delta_{s^{\prime}}.$
\item For two vectors $\mathbf{p}, \mathbf{q}\in\mathbb{R}^{n}$ and $\mu>0$, if $|$\supp($\mathbf{p}$)$\cup$\supp($\mathbf{q}$)$|$$\le t$, then
\begin{eqnarray}
        |\langle\mathbf{p}, (\mathbf{I}-\mu\mathbf{\A}^{\! T}\mathbf{\A})\mathbf{q}\rangle|\le(|\mu-1|+\mu\delta_t)\Vert\mathbf{p}\Vert_2 \Vert\mathbf{q}\Vert_2;\label{rip11}
\end{eqnarray}
      moreover, if $U\subseteq\{1,\dots,n\}$ and $|U \cup \supp(\mathbf{q})$$|$$\le t$, then
\begin{eqnarray}
            \Vert((\mathbf{I}-\mu\mathbf{\A}^{\! T}\mathbf{\A})\mathbf{q})_U\Vert_2\le(|\mu-1|+\mu\delta_t)\Vert\mathbf{q}\Vert_2.\label{rip12}
\end{eqnarray}
We omit  the proofs of \eqref{rip11} and \eqref{rip12} here for their  similarity to the proofs of \cite[Lemma 1]{song2013improved}.
\end{enumerate}
\end{lemma}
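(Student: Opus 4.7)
The plan is to base all three claims on a single structural observation: the RIP can be rephrased as saying that for any index set $T$ with $|T|\le t$, the Gram matrix $\mathbf{\A}_T^{\! T}\mathbf{\A}_T$ has all eigenvalues in $[1-\delta_t,1+\delta_t]$, so that the symmetric matrix $\mathbf{I}-\mathbf{\A}_T^{\! T}\mathbf{\A}_T$ has operator norm at most $\delta_t$. The monotonicity claim then follows from a purely definitional argument: every $s$-sparse vector is automatically $s'$-sparse whenever $s\le s'$, so the constant $\delta_{s'}$ satisfies the defining inequality at sparsity $s$ as well, and minimality of $\delta_s$ forces $\delta_s\le\delta_{s'}$.

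For \eqref{rip11} I would split $\mathbf{I}-\mu\mathbf{\A}^{\! T}\mathbf{\A}=(1-\mu)\mathbf{I}+\mu(\mathbf{I}-\mathbf{\A}^{\! T}\mathbf{\A})$, handle the $(1-\mu)\mathbf{I}$ piece by Cauchy--Schwarz (which supplies the $|1-\mu|$ factor), and bound the residual inner product $\langle\mathbf{p},(\mathbf{I}-\mathbf{\A}^{\! T}\mathbf{\A})\mathbf{q}\rangle$ by restricting attention to $T=\supp(\mathbf{p})\cup\supp(\mathbf{q})$. Since both vectors are supported on $T$ (with $|T|\le t$), this inner product is controlled by the operator norm of $\mathbf{I}-\mathbf{\A}_T^{\! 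T}\mathbf{\A}_T$, which is at most $\delta_t$ by the observation above, yielding $\delta_t\|\mathbf{p}\|_2\|\mathbf{q}\|_2$; the triangle inequality then assembles the target bound $(|1-\mu|+\mu\delta_t)\|\mathbf{p}\|_2\|\mathbf{q}\|_2$.

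Finally, for \eqref{rip12} I would deploy the self-pairing trick: set $\mathbf{p}:=((\mathbf{I}-\mu\mathbf{\A}^{\! T}\mathbf{\A})\mathbf{q})_U$, so that $\mathbf{p}$ is supported on $U$ and satisfies $\|\mathbf{p}\|_2^2=\langle\mathbf{p},(\mathbf{I}-\mu\mathbf{\A}^{\! T}\mathbf{\A})\mathbf{q}\rangle$, because entries outside $U$ are annihilated when paired with $\mathbf{p}$. Since $|\supp(\mathbf{p})\cup\supp(\mathbf{q})|\le|U\cup\supp(\mathbf{q})|\le t$, applying \eqref{rip11} produces $\|\mathbf{p}\|_2^2\le(|1-\mu|+\mu\delta_t)\|\mathbf{p}\|_2\|\mathbf{q}\|_2$, and dividing out $\|\mathbf{p}\|_2$ (with the case $\mathbf{p}=\0$ being trivial) finishes the claim. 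None of this is genuinely hard; the only real subtlety is to notice that the parameter $\mu$ enters cleanly through the affine decomposition used in the middle paragraph, so that the parametrized bound comes for free from the routine $\mu=1$ orthogonality lemma plus one triangle inequality, and that a single self-pairing step converts the bilinear bound \eqref{rip11} into the restricted-norm bound \eqref{rip12} without any additional RIP input.
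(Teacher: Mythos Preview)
Your argument is correct in all three parts: the monotonicity follows from the definitional inclusion of $s$-sparse vectors among $s'$-sparse ones; the affine split $\mathbf{I}-\mu\mathbf{\A}^{\! T}\mathbf{\A}=(1-\mu)\mathbf{I}+\mu(\mathbf{I}-\mathbf{\A}^{\! T}\mathbf{\A})$ together with the operator-norm bound $\|\mathbf{I}-\mathbf{\A}_T^{\! T}\mathbf{\A}_T\|\le\delta_t$ and Cauchy--Schwarz gives \eqref{rip11}; and the self-pairing choice $\mathbf{p}=((\mathbf{I}-\mu\mathbf{\A}^{\! T}\mathbf{\A})\mathbf{q})_U$ reduces \eqref{rip12} to \eqref{rip11}. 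The paper itself omits the proof entirely, deferring to \cite[Lemma~1]{song2013improved}, so there is no in-paper argument to compare against; what you have written is the standard route and is exactly the kind of proof that reference would supply.
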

\vspace{0.1in}
\begin{lemma}[Noise perturbation in partial support \cite{foucart2011hard}]
\label{lem:noise}
For the general CS model $
\mathbf{b}=\mathbf{\A}\mathbf{x}_S+\mathbf{e}^{\prime}$ in (\ref{eq:general_model}), letting $U\subseteq\{1,\ldots,n\}$ and $|U|\le u$, we have
\begin{eqnarray}
\label{rip13}
\Vert(\mathbf{\A}^{\! T}\mathbf{e}^{\prime})_{U}\Vert_2\le\sqrt{1+\delta_{u}}\Vert\mathbf{e}^{\prime}\Vert_2.
\end{eqnarray}
\end{lemma}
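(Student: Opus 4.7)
The plan is a one-line RIP-based computation. First, I would rewrite the partial-support norm as a submatrix operation: since $U\subseteq\{1,\ldots,n\}$, the notation defined in the Introduction gives $(\mathbf{A}^{\!T}\mathbf{e}^{\prime})_U=\mathbf{A}_U^{\!T}\mathbf{e}^{\prime}$ (restricting a product in $\mathbb{R}^n$ to rows in $U$ is the same as multiplying $\mathbf{e}^{\prime}$ by the transpose of the column-submatrix $\mathbf{A}_U$). The goal thus reduces to bounding $\|\mathbf{A}_U^{\!T}\mathbf{e}^{\prime}\|_2$.

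Next, I would apply the standard operator-norm inequality,
\[
\|\mathbf{A}_U^{\!T}\mathbf{e}^{\prime}\|_2\le\|\mathbf{A}_U^{\!T}\|_{2\to 2}\,\|\mathbf{e}^{\prime}\|_2=\|\mathbf{A}_U\|_{2\to 2}\,\|\mathbf{e}^{\prime}\|_2,
\]
using that a real matrix and its transpose share the same spectral norm.

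Finally, I would use the RIP to dominate $\|\mathbf{A}_U\|_{2\to 2}$ by $\sqrt{1+\delta_u}$. For any $\mathbf{v}\in\mathbb{R}^{|U|}$, let $\tilde{\mathbf{v}}\in\mathbb{R}^n$ be its zero-extension supported on $U$; this is $|U|$-sparse, and since $|U|\le u$ we have $\|\tilde{\mathbf{v}}\|_0\le u$. By monotonicity it suffices to apply Definition~\ref{def:rip} at order $u$ (or, more directly, at order $|U|$ followed by monotonicity), yielding
\[
\|\mathbf{A}_U\mathbf{v}\|_2^2=\|\mathbf{A}\tilde{\mathbf{v}}\|_2^2\le(1+\delta_u)\|\tilde{\mathbf{v}}\|_2^2=(1+\delta_u)\|\mathbf{v}\|_2^2.
\]
Taking the supremum over unit $\mathbf{v}$ gives $\|\mathbf{A}_U\|_{2\to 2}\le\sqrt{1+\delta_u}$, and chaining this with the previous display produces the stated bound.

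There is no serious obstacle here: the argument is just a clean application of the upper half of the RIP after identifying the correct submatrix. The only thing to be careful about is the indexing bookkeeping that turns the partial-support restriction $(\cdot)_U$ into the submatrix $\mathbf{A}_U^{\!T}$ acting on $\mathbf{e}^{\prime}$; once that identification is made, the rest is automatic from the RIP definition and the fact that $\|\mathbf{A}_U\|_{2\to 2}=\|\mathbf{A}_U^{\!T}\|_{2\to 2}$.
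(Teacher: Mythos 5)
Your proof is correct. Note that the paper itself gives no proof of this lemma --- it is imported verbatim from \cite{foucart2011hard} --- so there is no in-paper argument to compare against; your derivation stands on its own and is the standard one. The identification $(\A^{\!T}\e^{\prime})_U=\A_U^{\!T}\e^{\prime}$, the operator-norm step, and the bound $\|\A_U\|_{2\to 2}\le\sqrt{1+\delta_{|U|}}\le\sqrt{1+\delta_u}$ (upper RIP inequality applied to zero-extensions, plus monotonicity of $\delta$) together give exactly \eqref{rip13}. For reference, the cited source reaches the same conclusion by the equivalent inner-product trick: writing $\Vert(\A^{\!T}\e^{\prime})_U\Vert_2^2=\langle \A(\A^{\!T}\e^{\prime})_U,\e^{\prime}\rangle\le\Vert\A(\A^{\!T}\e^{\prime})_U\Vert_2\Vert\e^{\prime}\Vert_2\le\sqrt{1+\delta_u}\,\Vert(\A^{\!T}\e^{\prime})_U\Vert_2\,\Vert\e^{\prime}\Vert_2$ and dividing through; both routes are just two phrasings of $\Vert\A_U\Vert_{2\to 2}=\Vert\A_U^{\!T}\Vert_{2\to 2}\le\sqrt{1+\delta_u}$, so nothing is gained or lost either way.
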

The next lemma introduces a simple inequality introduced in \cite{song2013improved} which is useful in our derivations.

Consider the general CS model $
\mathbf{b}=\mathbf{\A}\mathbf{x}_S+\mathbf{e}^{\prime}$ in (\ref{eq:general_model}). Let $S^{\prime}\subseteq \{1,2,\ldots,n\}$ and $|S^{\prime}|=t$. Let $\mathbf{z}_{p}$ be the solution of the least squares problem $\mbox{arg}\min_{\mathbf{z}\in\mathbb{R}^{n}}\{\Vert\mathbf{b}-
\mathbf{\A}\mathbf{z}\Vert_2, \;\text{supp}(\mathbf{z}$)$\subseteq S^{\prime}\}$. The least squares problem has the following orthogonal properties introduced in \cite{song2013improved}.

\begin{lemma}[Consequences for orthogonality by the RIP \cite{song2013improved}]
\label{lem:orthogonality-rip}
If $\delta_{s+t}<1$,
\begin{equation}
\Vert\mathbf{x}_S-\mathbf{z}_{p}\Vert_2\le\sqrt{\dfrac{1}{1-\delta_{s+t}^2}}
\Vert(\mathbf{x}_S)_{\overline{S^{\prime}}}\Vert_2+\dfrac{\sqrt{1+\delta_{t}}}
{1-\delta_{s+t}}\Vert\mathbf{e}^{\prime}\Vert_2.\label{eq:orthogonality-rip2}
\end{equation}
\end{lemma}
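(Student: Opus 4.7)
The plan is to split $\mathbf{u}:=\mathbf{x}_S-\mathbf{z}_p$ into its $S'$-part $\mathbf{p}:=\mathbf{u}_{S'}$ and its $\overline{S'}$-part. Because $\mathbf{z}_p$ is supported on $S'$, this second piece equals $(\mathbf{x}_S)_{\overline{S'}}$, and the two pieces have disjoint supports, so Pythagoras gives $\|\mathbf{u}\|_2^2=\|\mathbf{p}\|_2^2+\|(\mathbf{x}_S)_{\overline{S'}}\|_2^2$. Moreover $\supp(\mathbf{u})\subseteq S\cup S'$ has cardinality at most $s+t$, so the RIP of order $s+t$ is available for $\mathbf{u}$. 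All that remains is to bound $\|\mathbf{p}\|_2$.

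Set $\mathbf{q}:=(\mathbf{x}_S)_{\overline{S'}}$. The normal equation for $\mathbf{z}_p$ is $\mathbf{A}_{S'}^{T}(\mathbf{b}-\mathbf{A}\mathbf{z}_p)=\mathbf{0}$; substituting $\mathbf{b}=\mathbf{A}\mathbf{z}_p+\mathbf{A}\mathbf{u}+\mathbf{e}'$ and taking the inner product with $\mathbf{p}$ (viewed as an $n$-vector supported on $S'$), after expanding $\mathbf{A}\mathbf{u}=\mathbf{A}\mathbf{p}+\mathbf{A}\mathbf{q}$, gives the key identity
$$\|\mathbf{A}\mathbf{p}\|_2^2 \;=\; -\langle \mathbf{A}\mathbf{p},\mathbf{A}\mathbf{q}\rangle \;-\; \langle\mathbf{A}\mathbf{p},\mathbf{e}'\rangle.$$
The two terms on the right are exactly those controlled by the RIP machinery: since $\mathbf{p}$ and $\mathbf{q}$ have disjoint supports whose union has cardinality at most $s+t$, Lemma~\ref{lem:rip}(2) with $\mu=1$ gives $|\langle\mathbf{A}\mathbf{p},\mathbf{A}\mathbf{q}\rangle|\le\delta_{s+t}\|\mathbf{p}\|_2\|\mathbf{q}\|_2$; writing $\langle\mathbf{A}\mathbf{p},\mathbf{e}'\rangle=\langle\mathbf{p},(\mathbf{A}^{T}\mathbf{e}')_{S'}\rangle$ and invoking Lemma~\ref{lem:noise} with $U=S'$ gives $|\langle\mathbf{A}\mathbf{p},\mathbf{e}'\rangle|\le\sqrt{1+\delta_t}\,\|\mathbf{p}\|_2\|\mathbf{e}'\|_2$. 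Combined with the RIP lower bound $\|\mathbf{A}\mathbf{p}\|_2^2\ge(1-\delta_t)\|\mathbf{p}\|_2^2$, this yields a linear inequality in $\|\mathbf{p}\|_2$.

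Plugging that into $\|\mathbf{u}\|_2^2=\|\mathbf{p}\|_2^2+\|\mathbf{q}\|_2^2$ and using $\sqrt{a^2+b^2}\le a+b$ on the noise cross term produces a bound of the advertised form. The delicate part, and what I expect to be the main obstacle, is sharpening the coefficient on $\|(\mathbf{x}_S)_{\overline{S'}}\|_2$ from the naive $\sqrt{(1+\delta_{s+t})/(1-\delta_{s+t})}$ (given by the triangle-style argument above) down to the claimed $\sqrt{1/(1-\delta_{s+t}^2)}$, i.e., trimming off one factor of $1+\delta_{s+t}$ in the denominator. To extract the missing factor I would also feed the expansion $\|\mathbf{A}\mathbf{u}\|_2^2=\|\mathbf{A}\mathbf{q}\|_2^2-\|\mathbf{A}\mathbf{p}\|_2^2-2\langle\mathbf{A}\mathbf{p},\mathbf{e}'\rangle$ (obtained from the identity above by expanding $\|\mathbf{A}(\mathbf{p}+\mathbf{q})\|_2^2$) into the RIP lower $(1-\delta_{s+t})\|\mathbf{u}\|_2^2\le\|\mathbf{A}\mathbf{u}\|_2^2$, and combine it with the RIP upper $\|\mathbf{A}\mathbf{q}\|_2^2\le(1+\delta_{s+t})\|\mathbf{q}\|_2^2$, so that the factorization $1-\delta_{s+t}^2=(1-\delta_{s+t})(1+\delta_{s+t})$ arises directly in the denominator of the $\|\mathbf{q}\|_2$ coefficient.
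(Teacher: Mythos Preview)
The paper itself does not prove this lemma; it is quoted verbatim from \cite{song2013improved} and used as a black box. So there is no ``paper's own proof'' to compare against, and your task is really to produce a self-contained argument that hits the stated constants.

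Your setup is exactly right: the decomposition $\u=\mathbf{p}+\mathbf{q}$ with $\mathbf{p}=\u_{S'}$, $\mathbf{q}=(\x_S)_{\overline{S'}}$, Pythagoras $\|\u\|_2^2=\|\mathbf{p}\|_2^2+\|\mathbf{q}\|_2^2$, and the normal-equation identity $\langle\A\mathbf{p},\A\u\rangle=-\langle\A\mathbf{p},\e'\rangle$ are the correct ingredients. The gap is in the sharpening step. Feeding $\|\A\u\|_2^2=\|\A\mathbf{q}\|_2^2-\|\A\mathbf{p}\|_2^2-2\langle\A\mathbf{p},\e'\rangle$ into $(1-\delta_{s+t})\|\u\|_2^2\le\|\A\u\|_2^2$ together with $\|\A\mathbf{q}\|_2^2\le(1+\delta_{s+t})\|\mathbf{q}\|_2^2$ gives, after dropping the nonpositive term $-\|\A\mathbf{p}\|_2^2$, exactly
\[
\|\u\|_2^2\;\le\;\frac{1+\delta_{s+t}}{1-\delta_{s+t}}\,\|\mathbf{q}\|_2^2\;+\;\text{(noise)},
\]
which is precisely the ``naive'' coefficient $\sqrt{(1+\delta_{s+t})/(1-\delta_{s+t})}$ you were trying to beat. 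The factorization $1-\delta_{s+t}^2$ does \emph{not} emerge from this route, because once you split $\A\u$ into $\A\mathbf{p}+\A\mathbf{q}$ and apply separate RIP bounds, you have thrown away the constraint that ties $\mathbf{p}$ to $\u$.

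The fix is to keep $\u$ intact when you exploit the normal equation. From $\langle\A\mathbf{p},\A\u\rangle=-\langle\A\mathbf{p},\e'\rangle$ write
\[
\|\mathbf{p}\|_2^2=\langle\mathbf{p},\u\rangle=\langle\mathbf{p},(\I-\A^{T}\A)\u\rangle+\langle\A\mathbf{p},\A\u\rangle
=\langle\mathbf{p},(\I-\A^{T}\A)\u\rangle-\langle\A\mathbf{p},\e'\rangle,
\]
and now Lemma~\ref{lem:rip}(2) (with $\mu=1$) and Lemma~\ref{lem:noise} give
\[
\|\mathbf{p}\|_2\;\le\;\delta_{s+t}\|\u\|_2+\sqrt{1+\delta_t}\,\|\e'\|_2.
\]
Squaring and substituting $\|\mathbf{p}\|_2^2=\|\u\|_2^2-\|\mathbf{q}\|_2^2$ yields a quadratic in $\|\u\|_2$ whose positive root is exactly
\[
\|\u\|_2\;\le\;\frac{1}{\sqrt{1-\delta_{s+t}^2}}\,\|\mathbf{q}\|_2+\frac{\sqrt{1+\delta_t}}{1-\delta_{s+t}}\,\|\e'\|_2,
\]
matching the lemma. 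The single change from your plan is to bound $\|\mathbf{p}\|_2$ against $\|\u\|_2$ rather than $\|\mathbf{q}\|_2$; that is what makes $1-\delta_{s+t}^2$ appear instead of $(1-\delta_{s+t})/(1+\delta_{s+t})^{-1}$.
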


\subsection{Proof of Theorem 1}
%

Firstly, an inequality is derived from the equality \eqref{eq:14}, which uses a similar derivation from \cite{foucart2011hard}. Then we apply it to IAD, NIAD and ADP respectively. Finally, Lemma 2 is used to get the Theorem 1.

For $k\in\{1,2,3,\cdots\}$, using \eqref{eq:11}, \eqref{eq:14} can be rewrote as follows
\begin{eqnarray}
\x(k+1)&=&H_s\Bigg(\x(k)+\mu\Bigg(\A^{\! T}\A(\x_S-\x(k))+\frac{1-\gamma}{2}\sum_{i=1}^{k-1}\frac{1}{(1+\gamma)^{k-i}}\A^{\! T}\A(\x_S-\x(i))\nonumber\\
&&
-\frac{\gamma}{2(1+\gamma)^{k}}\A^{\! T}\A(\x_S-\x(0))+\left(1-\frac{1}{(1+\gamma)^{k+1}}\right)\A^{\! T}\e^{\prime}\Bigg)\Bigg).\label{eq:t1}
\end{eqnarray}

Denote
\begin{eqnarray*}
\h(k)=\frac{1-\gamma}{2}\sum_{i=1}^{k-1}\frac{1}{(1+\gamma)^{k-i}}\A^{\! T}\A(\x_S-\x(i))-\frac{\gamma}{2(1+\gamma)^{k}}\A^{\! T}\A(\x_S-\x(0))+\left(1-\frac{1}{(1+\gamma)^{k+1}}\right)\A^{\! T}\e^{\prime},
\end{eqnarray*}
following a similar derivation skill from \cite{foucart2011hard}, in the hard thresholding operator \eqref{eq:t1}, one has
\begin{eqnarray*}
&&\|\left(\x(k)+\mu\left(\A^{\! T}\A(\x_S-\x(k))+\h(k)\right)\right)_S\|_2\nonumber\\
&&\quad\le\|\left(\x(k)+\mu\left(\A^{\! T}\A(\x_S-\x(k))+\h(k)\right)\right)_{S(k+1)}\|_2, \\
\end{eqnarray*}
then
\begin{eqnarray}
&&\|\left(\x(k)+\mu\left(\A^{\! T}\A(\x_S-\x(k))+\h(k)\right)\right)_{S\backslash S(k+1) }\|_2\nonumber\\
&&\quad\le\|\left(\x(k)+\mu\left(\A^{\! T}\A(\x_S-\x(k))+\h(k)\right)\right)_{S(k+1)\backslash S}\|_2. \label{eq:proof1}
\end{eqnarray}

For the right hand of \eqref{eq:proof1}, one has
\begin{eqnarray}
&&\|\left(\x(k)+\mu\left(\A^{\! T}\A(\x_S-\x(k))+\h(k)\right)\right)_{S(k+1)\backslash S}\|_2\nonumber\\
&&\quad = \|((\mu\A^{T}\A-\I)(\x_S-\x(k))+\mu\h(k))_{S(k+1)\backslash S}\|_2. \label{eq:proof2}
\end{eqnarray}
For the left hand of \eqref{eq:proof1}, one has
\begin{eqnarray}
&&\|\left(\x(k)+\mu\left(\A^{\! T}\A(\x_S-\x(k))+\h(k)\right)\right)_{S\backslash S(k+1) }\|_2 \nonumber\\
&=&\|\left(\x(k)+\mu\left(\left(\A^{\! T}\A(\x_S-\x(k))+\h(k)\right)\right)-\x_S+\x_S
\right)_{S\backslash S(k+1) }\|_2 \nonumber\\
&\ge& \|(\x_S)_{\overline{S(k+1)}}\|_2-\|((\mu\A^{T}\A-\I)(\x_S-\x(k))+\mu\h(k))_{S\backslash S(k+1)}\|_2.\label{eq:proof3}
\end{eqnarray}
Denote $S\triangle S(k+1)=(S\backslash S(k+1))\cup (S(k+1)\backslash S)$. Combing \eqref{eq:proof2} and \eqref{eq:proof3}, it follows that
\begin{eqnarray}
\|(\x_S)_{\overline{S(k+1)}}\|_2\le\sqrt{2}\|((\mu\A^{\! T}\A-\I)(\x_S-\x(k))+\mu\h(k))_{S\triangle S(k+1)}\|_2. \label{eq:t2}
\end{eqnarray}
\begin{enumerate}
\item
For IAD and NIAD, following a similar derivation skill from \cite{foucart2011hard}, one has
\begin{eqnarray*}
\|\x_S-\x(k+1)\|_2^2&=&\|(\x_S-\x(k+1))_{S(k+1)}\|_2^2+\|(\x_S)_{\overline{S(k+1)}}\|_2^2 \\
&=&\|((\mu\A^{\! T}\A-\I)(\x_S-\x(k))+\mu\h(k))_{S(k+1)}\|_2^2\\
&&+2\|((\mu\A^{\! T}\A-\I)(\x_S-\x(k))+\mu\h(k))_{S\triangle S(k+1)}\|_2^2 \\
&\le&3\|((\mu\A^{\! T}\A-\I)(\x_S-\x(k))+\mu\h(k))_{S\cup S(k+1)}\|_2^2.
\end{eqnarray*}
By triangle inequality, RIC definition and Lemma \ref{lem:rip}, it follows that
\begin{eqnarray}
&&\|\x_S-\x(k+1)\|_2 \nonumber\\
&\le&\sqrt{3}\Bigg((|\mu-1|+\mu\delta_{3s})\|\x_S-\x(k)\|_2\nonumber\\
&&+\frac{\mu|1-\gamma|}{2}(1+\delta_{3s})\sum_{i=1}^{k-1}\frac{1}{(1+\gamma)^{k-i}}\|\x_S-\x(i)\|_2\nonumber\\
&&
+\frac{\mu\gamma(1+\delta_{3s})}{(1+\gamma)^k}\|\x_S-\x(0)\|_2+\mu\left(1-\frac{1}{(1+\gamma)^{k+1}}\right)\sqrt{1+\delta_{3s}}\|\e^{\prime}\|_2\Bigg).
\label{eq:t4}
\end{eqnarray}
Particularly, for $k=1$, one has
\begin{eqnarray}
&&\|\x_S-\x(2)\|_2 \nonumber\\
&\le&\sqrt{3}\Bigg((|\mu-1|+\mu\delta_{3s})\|\x_S-\x(1)\|_2\nonumber\\
&&
+\frac{\mu\gamma(1+\delta_{3s})}{1+\gamma}\|\x_S-\x(0)\|_2+\mu\left(1-\frac{1}{(1+\gamma)^{2}}\right)\sqrt{1+\delta_{3s}}\|\e^{\prime}\|_2\Bigg).
\label{eq:t14}
\end{eqnarray}

For NIAD, in step 2, by RIP definition,
\[
\frac{1}{1+\delta_{3s}}\le\mu(k+1)=\frac{\|\left(-\nabla f(\x(k))+\u(k)-\v(k)\right)_{S(k)}\|_2^2}{\|\A\left(-\nabla f(\x(k))+\u(k)-\v(k)\right)_{S(k)}\|_2^2}\le\frac{1}{1-\delta_{3s}}.\]
Then one has
\begin{eqnarray}
|\mu-1|+\mu\delta_{3s}&\le&\max\left\{\left|\frac{1}{1+\delta_{3s}}-1\right|+\frac{1}{1+\delta_{3s}}\delta_{3s},
\left|\frac{1}{1-\delta_{3s}}-1\right|+\frac{1}{1-\delta_{3s}}\delta_{3s}\right\}\nonumber\\
&=&\max\left\{\frac{2\delta_{3s}}{1+\delta_{3s}},\frac{2\delta_{3s}}{1-\delta_{3s}}\right\}\nonumber\\
&=&\frac{2\delta_{3s}}{1-\delta_{3s}}. \label{eq:t5}
\end{eqnarray}
So, for NIAD, from \eqref{eq:t4} and \eqref{eq:t5}, it follows that
\begin{eqnarray}
&&\|\x_S-\x(k+1)\|_2 \nonumber\\
&\le&\sqrt{3}\Bigg(\frac{2\delta_{3s}}{1-\delta_{3s}}\|\x_S-\x(k)\|_2+\frac{|1-\gamma|(1+\delta_{3s})}{2(1-\delta_{3s})}\sum_{i=1}^{k-1}\frac{1}{(1+\gamma)^{k-i}}\|\x_S-\x(i)\|_2\nonumber\\
&&
+\frac{\gamma(1+\delta_{3s})}{(1+\gamma)^k(1-\delta_{3s})}\|\x_S-\x(0)\|_2\nonumber\\
&&+\left(1-\frac{1}{(1+\gamma)^{k+1}}\right)\frac{\sqrt{1+\delta_{3s}}}{1-\delta_{3s}}\|\e^{\prime}\|_2\Bigg).
\label{eq:t6}
\end{eqnarray}

Particularly, for $k=1$, one has
\begin{eqnarray}
&&\|\x_S-\x(2)\|_2 \nonumber\\
&\le&\sqrt{3}\Bigg(\frac{2\delta_{3s}}{1-\delta_{3s}}\|\x_S-\x(1)\|_2
+\frac{\gamma(1+\delta_{3s})}{(1+\gamma)(1-\delta_{3s})}\|\x_S-\x(0)\|_2\nonumber\\
&&+\left(1-\frac{1}{(1+\gamma)^{2}}\right)\frac{\sqrt{1+\delta_{3s}}}{1-\delta_{3s}}\|\e^{\prime}\|_2\Bigg).
\label{eq:t16}
\end{eqnarray}

%

\item For ADP, in step 4, by Lemma \ref{lem:orthogonality-rip},
\be
\|\x_S-\x(k+1)\|_2\le\sqrt{\frac{1}{1-\delta_{3s}^2}}\|(\x_S)_{\overline{S(k+1)}}\|_2+\frac{\sqrt{1+\delta_{3s}}}{1-\delta_{3s}}\|\e^{\prime}\|_2.
\label{eq:t7}
\ee
Setting $\mu=1$ and combing \eqref{eq:t2} and \eqref{eq:t7}, one has
\begin{eqnarray}
&&\|\x_S-\x(k+1)\|_2\nonumber\\
&\le&\sqrt{\frac{2}{1-\delta_{3s}^2}}\|((\A^{\! T}\A-\I)(\x_S-\x(k))+\h(k))_{S\triangle S(k+1)}\|_2+\frac{\sqrt{1+\delta_{3s}}}{1-\delta_{3s}}\|\e^{\prime}\|_2\nonumber\\
&\le&\sqrt{\frac{2}{1-\delta_{3s}^2}}\Bigg(\delta_{3s}\|\x_S-\x(k)\|_2+\frac{|1-\gamma|}{2}(1+\delta_{3s})\sum_{i=1}^{k-1}\frac{1}{(1+\gamma)^{k-i}}\|\x_S-\x(i)\|_2\nonumber\\
&&+\frac{\gamma(1+\delta_{3s})}{(1+\gamma)^k}\|\x_S-\x(0)\|_2+\left(\sqrt{\frac{1+\delta_{3s}}{2(1-\delta_{3s})}}+1-\frac{1}{(1+\gamma)^{k+1}}\right)\sqrt{1+\delta_{3s}}\|\e^{\prime}\|_2\Bigg).
\nonumber\\ \label{eq:t16}
\end{eqnarray}

Particularly, for $k=1$, one has
\begin{eqnarray}
&&\|\x_S-\x(2)\|_2\nonumber\\
&\le&\sqrt{\frac{2}{1-\delta_{3s}^2}}\Bigg(\delta_{3s}\|\x_S-\x(1)\|_2+\frac{\gamma(1+\delta_{3s})}{1+\gamma}\|\x_S-\x(0)\|_2\nonumber\\
&&+\left(\sqrt{\frac{1+\delta_{3s}}{2(1-\delta_{3s})}}+1-\frac{1}{(1+\gamma)^{2}}\right)\sqrt{1+\delta_{3s}}\|\e^{\prime}\|_2\Bigg).
\label{eq:t17}
\end{eqnarray}

In \eqref{eq:t4}, denote
\begin{eqnarray*}
b_1^{IAD}&=&\sqrt{3}(|\mu-1|+\mu\delta_{3s}),\\
b_2^{IAD}&=&\frac{\sqrt{3}\mu|1-\gamma|(1+\delta_{3s})}{2},\\
b_3^{IAD}&=&b_5\|\x_S-\x(0)\|_2-b_6\|\e^{\prime}\|_2,\\
b_4^{IAD}&=&b_7^{IAD}\|\e^{\prime}\|_2.
\end{eqnarray*}
where
\begin{eqnarray*}
b_5^{IAD}&=&\sqrt{3}\mu\gamma(1+\delta_{3s}),\\
b_6^{IAD}&=&\frac{\mu\sqrt{3(1+\delta_{3s})}}{1+\gamma},\\
b_7^{IAD}&=&\mu\sqrt{3(1+\delta_{3s})}.
\end{eqnarray*}

In \eqref{eq:t6}, denote
\begin{eqnarray*}
b_1^{NIAD}&=&\frac{2\sqrt{3}\delta_{3s}}{1-\delta_{3s}},\\
b_2^{NIAD}&=&\frac{\sqrt{3}|1-\gamma|(1+\delta_{3s})}{2(1-\delta_{3s})},\\
b_3^{NIAD}&=&b_5^{NIAD}\|\x_S-\x(0)\|_2-b_6^{NIAD}\|\e^{\prime}\|_2,\\
b_4^{NIAD}&=&b_7^{NIAD}\|\e^{\prime}\|_2.
\end{eqnarray*}
where
\begin{eqnarray*}
b_5^{NIAD}&=&\frac{\sqrt{3}\gamma(1+\delta_{3s})}{1-\delta_{3s}},\\
b_6^{NIAD}&=&\frac{\sqrt{3(1+\delta_{3s})}}{(1+\gamma)(1-\delta_{3s})},\\
b_7^{NIAD}&=&\frac{\sqrt{3(1+\delta_{3s})}}{1-\delta_{3s}}.
\end{eqnarray*}

In \eqref{eq:t16}, denote
\begin{eqnarray*}
b_1^{ADP}&=&\sqrt{\frac{2\delta_{3s}^2}{1-\delta_{3s}^2}}\quad,\\
b_2^{ADP}&=&|1-\gamma|\sqrt{\frac{1+\delta_{3s}}{2(1-\delta_{3s})}}\quad,\\
b_3^{ADP}&=&b_5^{ADP}\|\x_S-\x(0)\|_2-b_6^{ADP}\|\e^{\prime}\|_2,\\
b_4^{ADP}&=&b_7^{ADP}\|\e^{\prime}\|_2.
\end{eqnarray*}
where
\begin{eqnarray*}
b_5^{ADP}&=&\gamma\sqrt{\frac{2(1+\delta_{3s})}{1-\delta_{3s}}},\\
b_6^{ADP}&=&\frac{1}{1+\gamma}\sqrt{\frac{2}{1-\delta_{3s}}},\\
b_7^{ADP}&=&\frac{\sqrt{1+\delta_{3s}}}{1-\delta_{3s}}+\sqrt{\frac{2}{1-\delta_{3s}}}.
\end{eqnarray*}

In the initialization step, i.e., when $k=0$, the derivation step is similar to that in \cite{foucart2011hard}. We omit the steps here.

For IAD,
\begin{eqnarray}
\|\x_S-\x(1)\|_2&\le&\sqrt{3}\left(\left|\frac{1}{2}\mu-1\right|+\frac{1}{2}\mu\delta_{3s}\right)\|\x_S\|_2+\frac{1}{2}\mu\sqrt{3(1+\delta_{3s})}\|\e^{\prime}\|_2 . \label{eq:t8}
\end{eqnarray}

For NIAD, from \eqref{eq:t6}, in order to guarantee the convergence of NIAD, one can see that a necessary condition is $b_1^{NIAD}<1$, which results in  $\delta_{3s}<0.224$. Thus, by \eqref{eq:t5},
\begin{eqnarray*}
\left|\frac{1}{2}\mu-1\right|+\frac{1}{2}\mu\delta_{3s}&\le&\max\left\{\left|\frac{1}{2(1+\delta_{3s})}-1\right|+\frac{\delta_{3s}}{2(1+\delta_{3s})},\left|\frac{1}{2(1-\delta_{3s})}-1\right|+\frac{\delta_{3s}}{2(1-\delta_{3s})}\right\}\\
&=&\max\left\{\frac{1+3\delta_{3s}}{2(1+\delta_{3s})},\left|\frac{-1+2\delta_{3s}}{2(1-\delta_{3s})}\right|+\frac{\delta_{3s}}{2(1-\delta_{3s})}\right\}\\
&=&\max\left\{\frac{1}{2}+\frac{\delta_{3s}}{1+\delta_{3s}},\frac{1}{2}\right\}\\
&=&\frac{1+3\delta_{3s}}{2(1+\delta_{3s})}. \label{eq:t20}
\end{eqnarray*}

Then from \eqref{eq:t8} and \eqref{eq:t20}, one has
\begin{eqnarray}
\|\x_S-\x(1)\|_2&\le&\frac{\sqrt{3}(1+3\delta_{3s})}{2(1+\delta_{3s})}\|\x_S-\x(0)\|_2+\frac{\sqrt{3(1+\delta_{3s})}}{2(1-\delta_{3s})}\|\e^{\prime}\|_2.
\label{eq:t10}
\end{eqnarray}

For ADP,
\begin{eqnarray*}
\|\x_S-\x(1)\|_2\le\sqrt{\frac{2\delta_{3s}^2}{1-\delta_{3s}^2}}\|\x_S-\x(0)\|_2+\left(\frac{1}{\sqrt{2(1-\delta_{3s})}}+\frac{\sqrt{1+\delta_{3s}}}{1-\delta_{3s}}\right)\|\e^{\prime}\|_2.
\label{eq:t11}
\end{eqnarray*}

In \eqref{eq:t8}, denote
\be
b_8^{IAD}=\sqrt{3}\left(\left|\frac{1}{2}\mu-1\right|+\frac{1}{2}\mu\delta_{3s}\right),\quad b_9^{IAD}=\frac{1}{2}\mu\sqrt{3(1+\delta_{3s})}. \nonumber
\ee

In \eqref{eq:t10}, denote
\be
b_8^{NIAD}=\frac{\sqrt{3}(1+3\delta_{3s})}{2(1+\delta_{3s})}, \quad
b_9^{NIAD}=\frac{\sqrt{3(1+\delta_{3s})}}{2(1-\delta_{3s})}. \nonumber
\ee

In \eqref{eq:t11}, denote
\be
b_8^{ADP}=\sqrt{\frac{2\delta_{3s}^2}{1-\delta_{3s}^2}},\quad b_9^{ADP}=\frac{1}{\sqrt{2(1-\delta_{3s})}}+\frac{\sqrt{1+\delta_{3s}}}{1-\delta_{3s}}. \nonumber
\ee

Denote $a(k)=\|\x_S-\x(k)\|_2, b=\frac{1}{1+\gamma}$ and alg as any one from \{IAD, NIAD, ADP\}. From above derivations and notations, it follows that,
\begin{eqnarray}
a(k+1)&\le& b_1^{alg}a(k)+b_2^{alg}\sum_{i=1}^{k-1}b^{k-i}a(i)+b_3^{alg}b^k+b_4^{alg}, \label{eq:t21}\\
a(2)&\le& b_1^{alg}a(1)+b_3^{alg}b+b_4^{alg}, \label{eq:t22}\\
a(1)&\le& b_8^{alg}a(0)+b_9^{alg}\|\e^{\prime}\|_2. \label{eq:t23}
\end{eqnarray}
Using Lemma 2 to \eqref{eq:t21} and applying \eqref{eq:t22} and \eqref{eq:t23} to the resulting inequality, after some transforms, one has,
\begin{eqnarray}
\|\x_S-\x(k+1)\|_2&\le&(c_1^{alg}(\lambda_1^{alg})^{k-2}+c_2^{alg}(\lambda_2^{alg})^{k-2}+c_3^{alg} b^k)\|\x_S-\x(0)\|_2\nonumber\\
&&+(c_4^{alg}+c_5^{alg}(\lambda_1^{alg})^{k-2}+c_6^{alg}(\lambda_2^{alg})^{k-2}+c_7^{alg}b^k)\|\e^{\prime}\|_2.
\end{eqnarray}
where
\begin{eqnarray*}
c_1^{alg}&=&\left(b_1^{alg}b_8^{alg}+b b_5^{alg}\right)\left(\omega_1^{alg}+\frac{b_1^{alg}}{2}\right)+b_8^{alg}\left(\omega_2^{alg}+\frac{b_2^{alg}}{2}\right)+b^2b_5^{alg}\left(\omega_1^{alg}+\frac{b_1^{alg}}{2}\right)\theta_{11}^{alg},\\
c_2^{alg}&=&-\left(b_1^{alg}b_8^{alg}+b b_5^{alg}\right)\left(\omega_1^{alg}-\frac{b_1^{alg}}{2}\right)-b_8^{alg}\left(\omega_2^{alg}-\frac{b_2^{alg}}{2}\right)-b^2b_5^{alg}\left(\omega_1^{alg}-\frac{b_1^{alg}}{2}\right)\theta_{21}^{alg},\\
c_3^{alg}&=&\left(1+\left(\omega_1^{alg}+\frac{b_1^{alg}}{2}\right)\theta_{12}^{alg}-\left(\omega_1^{alg}-\frac{b_1^{alg}}{2}\right)\theta_{22}^{alg}\right)b_5^{alg},\\
c_4^{alg}&=&\left(1+\frac{\left(\omega_1^{alg}+\frac{b_1^{alg}}{2}\right)}{1-\lambda_1^{alg}}-\frac{\omega_1^{alg}-\frac{b_1^{alg}}{2}}{1-\lambda_2^{alg}}\right)b_7^{alg},\\
c_5^{alg}&=&\left(b_1^{alg}b_9^{alg}-b b_5^{alg}+b_7^{alg}\right)\left(\omega_1^{alg}+\frac{b_1^{alg}}{2}\right)+b_9^{alg}\left(\omega_2^{alg}+\frac{b_2^{alg}}{2}\right)\\
&&-b^2b_6^{alg}\left(\omega_1^{alg}+\frac{b_1^{alg}}{2}\right)-\frac{b_7^{alg}\left(\omega_1^{alg}+\frac{b_1^{alg}}{2}\right)}{1-\lambda_1^{alg}},\\
c_6^{alg}&=&-\left(b_1^{alg}b_9^{alg}-b b_5^{alg}+b_7^{alg}\right)\left(\omega_1^{alg}-\frac{b_1^{alg}}{2}\right)-b_9^{alg}\left(\omega_2^{alg}-\frac{b_2^{alg}}{2}\right)\\
&&+b^2b_6^{alg}\left(\omega_1^{alg}-\frac{b_1^{alg}}{2}\right)\theta_{21}^{alg}+\frac{b_7^{alg}\left(\omega_1-\frac{b_1^{alg}}{2}\right)}{1-\lambda_1^{alg}},\\
c_7^{alg}&=&-\left(1+\left(\omega_1^{alg}+\frac{b_1^{alg}}{2}\right)\theta_{12}^{alg}-\left(\omega_1^{alg}-\frac{b_1^{alg}}{2}\right)\theta_{22}^{alg}\right)b_6^{alg}.
\end{eqnarray*}

\be
b=\frac{1}{1+\gamma}<1,\nonumber
\ee
\begin{eqnarray*}
\lambda_1^{alg}=\frac{(b+b_1^{alg})+\sqrt{(b-b_1^{alg})^2+4 bb_2^{alg}}}{2}<1,  \lambda_2^{alg}=\frac{(b+b_1^{alg})-\sqrt{(b-b_1^{alg})^2+4bb_2^{alg}}}{2}<1.
\end{eqnarray*}

\be
\omega_1^{alg}=\frac{-b b_1^{alg}+(b_1^{alg})^2+2bb_2^{alg} }{2\sqrt{(b-b_1^{alg})^2+4 b b_2^{alg}}}, \quad
\omega_2^{alg}=\frac{(b+b_1^{alg})b_2^{alg}}{2\sqrt{(b-b_1^{alg})^2+4 b b_2^{alg}}}.
\ee
\begin{eqnarray*}
\theta_{11}^{alg}=\left\{
\begin{array}{cc}
\frac{1}{\lambda_1^{alg}-b}, & \lambda_1^{alg}\neq b \\
\frac{k-2}{\lambda_1^{alg}}, & \lambda_1^{alg}=b.
\end{array}\right.
 &,&
 \theta_{11}^{alg}=\left\{
\begin{array}{cc}
-\frac{1}{\lambda_1^{alg}-b}, & \lambda_1^{alg}\neq b \\
0, & \lambda_1^{alg}=b.
\end{array}\right. ;\\
\theta_{21}^{alg}=\left\{
\begin{array}{cc}
\frac{1}{\lambda_2^{alg}-b}, & \lambda_2^{alg}\neq b \\
\frac{k-2}{\lambda_2^{alg}}, & \lambda_2^{alg}=b.
\end{array}\right.
 &,&
 \theta_{22}^{alg}=\left\{
\begin{array}{cc}
-\frac{1}{\lambda_2^{alg}-b}, & \lambda_2^{alg}\neq b \\
0, & \lambda_2^{alg}=b.
\end{array}\right. .\\
\end{eqnarray*}
$\|\x_S-\x(k)\|_2$ will be bounded if $(1-b_1^{alg})(1-b)>b b_2$, which is equivalent to
\[
\rho^{alg}>\frac{|1-\gamma|}{\gamma},\]
where
\begin{eqnarray*}
&&\rho^{IAD}=\frac{2(1-\sqrt{3}(|1-\mu|+\mu\delta_{3s}))}{\sqrt{3}\mu(1+\delta_{3s})},\\
&&\rho^{NIAD}=\frac{2(1-(2\sqrt{3}+1)\delta_{3s})}{\sqrt{3}(1-\delta_{3s})^2(1+\delta_{3s})},\\
&&\rho^{ADP}=\frac{\sqrt{2}(\sqrt{1-\delta_{3s}^2}-\sqrt{2}\delta_{3s})}{1+\delta_{3s}}.
\end{eqnarray*}

Therefore, Theorem 1 is proved.

\subsection{Proofs of Corollary 1}
When $\x$ is $s$-sparse with support set $S$ and there is no noise,  if $\|\x(k+1)-\x\|_2<x_{\min}$, then $S(k+1)=S$. In this case, one can get the
exact solution after a posteriori least squares fit on $S(k+1)$. This case can be obtained if
\begin{eqnarray}
\|\x-\x(k+1)\|_2&<&(c_1^{alg}(\lambda_1^{alg})^{k-2}+c_2^{alg}(\lambda_2^{alg})^{k-2}+c_3^{alg}b^k)\|\x-\x(0)\|_2 \nonumber \\
&\le&(c_1^{alg}+c_2^{alg}+c_3^{alg}b^2)(\lambda^{alg})^{k-2}\|\x-\x(0)\|_2<x_{\min}. \label{eq:t31}
\end{eqnarray}
From $(c_1^{alg}+c_2^{alg}+c_3^{alg}b^2)(\lambda^{alg})^{k-2}\|\x-\x(0)\|_2<x_{\min}$ in \eqref{eq:t31}, one has
\[
k\ge\frac{\ln (x_{\min}/\|\x-\x(0)\|_2)-\ln(c_1^{alg}+c_2^{alg}+c_3^{alg}b^2)}{\ln \lambda^{alg}}+3.\]
Therefore, Corollary 1 is proved.
%
%

\end{enumerate}
\bibliographystyle{IEEEtran}
\bibliography{Bib2}
\end{document}